\let\OldStatex\Statex
\renewcommand{\Statex}[1][3]{%
    \setlength\@tempdima{\algorithmicindent}%
    \OldStatex\hskip\dimexpr#1\@tempdima\relax}
\newcommand{\Pref}{\text{\it Pref}}
\newcommand{\after}{\ensuremath{\text{-\underline{after}-}}}
\newcommand\strongred{\ensuremath{\preceq_{sr}}}
\newcommand\definedInputs{\ensuremath{\Delta}}
\newcommand\faultDomain{\ensuremath{\mathcal{F}(m)}}
\newcommand{\lang}{\ensuremath{L}}
\newcommand{\ii}[1]{\underline{#1}}
\newtheorem{definition}{Definition}
\newtheorem{theorem}{Theorem}
\newtheorem{lemma}{Lemma}
\newcommand{\crd}{\text{CR}}
\begin{document}
    
\runningheads{R. Sachtleben AND J. Peleska}{Effective grey-box testing with partial FSM models}

\title{Effective grey-box testing with partial FSM models}

\author{Robert Sachtleben\affil{1}\textsuperscript{,}\corrauth\ and
    Jan Peleska\affil{1}}

\address{\affilnum{1}University of Bremen, Bremen, Germany}

\corraddr{Robert Sachtleben, University of Bremen, Bremen, Germany. E-mail: rob\_sac@uni-bremen.de}

\cgs{Jan Peleska is partially funded by the Deutsche Forschungsgemeinschaft (DFG, German Research Foundation) -- project number  407708394.}

\begin{abstract}
For partial, nondeterministic, finite state machines, a new  conformance relation called strong reduction is presented. It complements other existing conformance relations in the sense that the new relation is well-suited for model-based testing of systems whose inputs are enabled or disabled, depending on the actual system state. Examples of such systems are graphical user interfaces and systems with   interfaces that can be enabled or disabled in a mechanical way. We present a new test generation algorithm producing complete test suites for strong reduction. The suites are executed according to the grey-box testing paradigm: it is assumed that the state-dependent sets of enabled inputs can be identified during test execution, while the implementation states remain hidden, as in black-box testing. 
It is shown that this grey-box information is exploited by the generation algorithm in such a way that the resulting best-case test suite size is only linear in the state space size of the reference model. Moreover,
examples show that this may lead to significant reductions of test suite size in comparison to true black-box testing for strong reduction. 
\end{abstract}

\keywords{model-based testing; partial finite state machines; complete test suites; conformance testing; grey-box testing}
\maketitle

\section{Introduction}\label{sec:introduction}

In
 this article, a new conformance relation for model-based testing against partial, nondeterministic  FSM models is presented. This relation is called \emph{strong reduction} and complements the well-known \emph{quasi-reduction}~\cite{DBLP:journals/scp/Hierons19,DBLP:conf/fates/PetrenkoY05,DBLP:journals/tse/Hierons17} in a way that, to our best knowledge, has been missing until today: recall that quasi-reduction allows implementations to realize {\it arbitrary} behaviours for inputs that are not   specified in a   state of the partial FSM reference model, after having run through a given IO trace.  Only for inputs that are specified in such a state   of the reference model, implementations are required to show a subset of the behaviours allowed according to the reference model.
Therefore, this conformance relation is best suited for testing against reference models that are incomplete due to lack of information about the expected behaviour in certain situations, or where a separate reference model is used to cover the cases that have not been handled in the first model. 

In contrast to this, the strong reduction conformance relation presented here requires that, after having run through an IO-trace which must also be in the language of the reference model, the implementation always accepts {\it exactly the same inputs} as the reference model and exhibits a subset of behaviours allowed according to the model. This kind of model is suitable when dealing with systems where the inputs are enabled or disabled in a state-dependent way.
Examples for this kind of systems are
\begin{itemize}
\item Graphical user interfaces: The   buttons accessible for mouse clicks, the text fields accessible for keyboard  input, and other typical input widgets (sliders, pull-down menus, \dots) may change, depending on the state of the interface. If an input widget is not visible in a certain interface state, there is no chance to access it from the outside world.

\item  Mechanical interfaces: A typical credit card reader slot, for example, can only accommodate one card. After that, the input of another card is disabled for mechanical reasons, until the current one has been ejected.
\end{itemize}

It should be noted that strong reduction cannot simply be replaced by the well-known ``standard''-notion of reduction which  only requires inclusion of the implementation's IO-language in the  language of the reference model: reduction would allow for a given partial, nondeterministic FSM model that an implementation   disables certain inputs that are enabled in the reference model.

We are interested in automatically generating test suites  from partial, nondeterministic  FSM models that are \emph{$m$-complete} in the sense that every   implementation which is a strong reduction of the reference model will pass such a suite, but every implementation violating strong reduction conformance will fail at least one test case of the suite, provided that the implementation has no more than $m$ states, while the reference model has $n \le m$ states. This type of questions has been investigated for the known   conformance relations language equivalence, reduction, quasi-equivalence, and quasi-reduction by many authors, a survey is given in Section~\ref{sec:related}. 

Apart from being of high interest for the theory of model-based testing, complete test suites are of particular importance in the field of test safety-critical systems, where the test strength of a suite needs to be justified. Through additional techniques such as using input equivalence classes,   complete test suites can be reduced to a manageable size, while still preserving their completeness properties, so that they are practically applicable to embedded control systems of medium complexity~\cite{Huebner2017}.

When investigating complete   suites for testing strong reduction conformance, a \emph{grey-box testing} approach is promising: from the examples listed above, we see that, while the internal state of an implementation still remains hidden as in black-box testing, the inputs enabled in the current implementation state may be revealed. In the case of software testing  graphical user interfaces, for example, the enabled input events can be captured by checking the visibility-status\footnote{In Java, for example, every widget derived from AWT class {\tt Component} has a Boolean getter method {\tt isVisible()}.} of each widget.  When testing implementations with mechanical interfaces, the enabled interfaces can often be identified by visible inspection which can also be automated using image evaluation techniques. Therefore, it is an interesting research question whether the availability of state-dependent information about enabled/disabled inputs will help to reduce the number of test cases to be performed for achieving completeness in a significant way.


The work presented in this article complements results published  in~\cite{hierons_testing_2004,DBLP:journals/scp/Hierons19}, where complete testing theories for quasi-conformance have been presented.
In particular, we consider the following results as the main contributions of this article.
\begin{enumerate}
\item The strong reduction conformance relation is introduced, to our best knowledge, for the first time.
It complements the known conformance relations language equivalence, reduction, quasi-equivalence, and quasi-reduction by providing a suitable means to specify conformance to incomplete models, where state-dependent, unspecified inputs are considered as {\it disabled} in the respective state. 

\item We introduce a new $m$-complete test case generation 
algorithm for checking strong reduction conformance in a grey-box setting, where the true state of the implementation is hidden as in black-box testing, but the state-dependent enabled or disabled inputs can be evaluated during test execution. The algorithm is a new variant of the known \emph{state counting method}~\cite{petrenko_testing_1996,hierons_testing_2004}, with a refined view on reliably distinguishable states and a
further test case reduction adopted from the H-Method~\cite{DBLP:conf/forte/DorofeevaEY05} used in testing for (quasi-)equivalence between FSMs. 

\item
The decrease of test suite size that can be achieved by  exploiting grey-box information is shown by means of complexity calculations and through concrete examples. It is explained why   grey-box testing is particularly advantageous when testing for strong reduction, whereas it could not be applied in a strictly analogous way for quasi-reduction testing.


\end{enumerate}

The algorithm for generating complete   suites for testing strong  reduction conformance has been implemented in the open source library {\it fsmlib-cpp}, as described in Appendix~\ref{appendix:mex_test_suite}.

\subsection{Overview}

In Section~\ref{sec:notation}, the basic notation and well-known facts about finites state machines are summarised, as far as needed for the results presented in this article.  In Section~\ref{sec:crexample}, an example is introduced which serves to motivate that one more conformance relation is needed in addition to the known ones. Moreover, this example is used to calculate a test suite of non-trivial size, using the new test generation strategy presented here. This test suite is too large to be be shown verbatim in this paper, it is available for download under \url{http://www.mbt-benchmarks.org}. In Section~\ref{sec:ssr}, the strong reduction conformance relation presented in this article is formally defined and compared to the existing conformance relations language equivalence, reduction, and quasi reduction.  

In Section~\ref{sec:statecounting}, test cases, pass relation, completeness and test oracles are specified for the purpose of grey-box testing against the strong reduction conformance relation.
In the remainder of this section, the test generation algorithm is presented. To prepare this, an extended notion of deterministically reachable states is defined in Section~\ref{sec:dreach}. Next,  the concept of reliable distinguishability is extended for the purpose of strong reduction testing in Section~\ref{sec:rdis}. An algorithm for computing sets of pairwise reliably distinguishable states is described. Then, in Section~\ref{sec:gen}, the main algorithm for generating complete strong reduction conformance test suites is presented, and its correctness is proven. 
To understand the ``mechanics'' of the test generation algorithm, a test suite derived from a small reference FSM is shown in Appendix~\ref{appendix:mex_test_suite}. There, it is also explained how to use the C++ library {\it fsmlib-cpp} for automatically creating complete   suites for testing strong reduction conformance.
In Section~\ref{sec:complexitymain}, bounds for corner cases of the test suite size are discussed. It is shown, how the grey-box evaluation of  enabled and disabled inputs can result in significant test suite reductions. The detailed  proofs of the test suite size bounds are presented in Appendix~\ref{sec:complexityproofs}. 
In Section~\ref{sec:relationrelations}, it is explained why complete test generation algorithms for reduction, quasi-reduction, and strong reduction differ significantly. In Section~\ref{sec:r0impact}, the subtle differences in r-distinguishability to be observed when comparing algorithms for quasi-reduction and strong reduction are discussed, and it is shown by means of examples, how these differences affect the test suite size.

In Section~\ref{sec:related} we discuss related work.  The conclusion is presented in Section~\ref{sec:comc}.


\section{Notation and Background}\label{sec:notation}

In this section, we introduce notation, definitions, and basic facts about finite state machines, as used in this article and related  publications, such as~\cite{DBLP:journals/scp/Hierons19,DBLP:conf/fates/PetrenkoY05,DBLP:journals/tse/Hierons17}.

In model-based testing (MBT), a 
\emph{system under  test (SUT)} is verified  by means of a systematic application of inputs to the SUT, where for each applied input the observed response is compared to the behaviours allowed by the reference model.
Depending on the underlying MBT test case generation strategy, inputs ``of interest'' are also identified using some kind of  model analysis.
 Here we assume that the SUT can be \textit{reset} to its initial state at any time, for example by switching it off and then on again.
For convenience, we write sequences of input-output (IO) pairs $(x_1,y_1).\ldots.(x_n,y_n)$ as $x_1\ldots x_n/y_1 \ldots y_n$ and use $\bar{x}/\bar{y}$ to denote sequences with \textit{input portion} $\bar{x}$ and \textit{output portion} $\bar{y}$. Furthermore, we also use $\alpha, \beta, \pi$, and $\tau$ to denote IO sequences, while $\epsilon$ denotes the empty sequence. Concatenation of sequences $\alpha$ and $\beta$ is denoted by $\alpha.\beta$.
For any sequence $\alpha$, let $\Pref(\alpha) = \{ \alpha_1~|~\exists \alpha_2 : \alpha=\alpha_1.\alpha_2 \}$ denote the set of prefixes of $\alpha$. We say that $\alpha_1$ is a \textit{proper} prefix of $\alpha$ if it is a prefix of $\alpha$ and also shorter than $\alpha$. Function $\Pref$ can be lifted to sets of sequences such that $\Pref(A) = \bigcup_{\alpha \in A} \Pref(\alpha)$. Finally, for sets of sequences $A$ and $B$ we use $A.B$ to denote the extension of every sequence in $A$ with every sequence in $B$ and define $A.\varnothing$ to result in $A$. 

A \textit{finite state machine (FSM)} $M= (S,\ii s,\Sigma_I,\Sigma_O,h_M)$ is a 5-tuple consisting of a finite set $S$ of \emph{states}, an \emph{initial state} $\ii s \in S$, finite sets $\Sigma_I$ and $\Sigma_O$ constituting the \emph{input and output alphabet}, respectively, and a \emph{transition relation} $h_M \subseteq S \times \Sigma_I \times \Sigma_O \times S$. The interpretation of the fact   $(s_1,x,y,s_2) \in h_M$ is that 
there exists  a transition in $M$ from $s_1$ to $s_2$ for input $x$ that produces output $y$. For $s \in S$ and $x \in \Sigma_I$ we write $out(s,x)$ to denote the set $\{y~|~\exists s' : (s,x,y,s') \in h_M\}$ of all possible outputs produced by $s$ in response to $x$.
We define the \emph{size} of $M$, denoted by $|M|$, as the number $|S|$ of states it contains.
The \textit{language} $\lang_M(s_0)$ of some state $s_0$ of $M$ denotes the set of all sequences $\bar{x}/\bar{y} \in (\Sigma_I \times \Sigma_O)^*$ of IO pairs such that $M$ can react to $\bar{x}$ applied to $s_0$ with outputs 
$\bar{y}$. More formally, if $\bar x = x_1\dots x_k$ and $\bar y =  y_1\dots y_k$, then $\bar x / \bar y\in L_M(s_0)$, if and only if, there exist states $s_1,\dots s_k\in S$, such that
\begin{equation}\label{eq:path}
  \forall i = 1,\dots,k : (s_{i-1},x_i,y_i,s_i) \in h_M
\end{equation}
The language of $M$ itself, denoted $\lang({M})$, is the language of its initial state, that is, 
$L(M) = L_M(\ii s)$. 

In some situations, we are interested in all input sequences $\bar x$ of a given length $|\bar x| = i$. To this end, the notation $\Sigma_I^i$ is used to denote the set of all possible input sequences of length $i$ over alphabet $\Sigma_I$. By convention, $\Sigma_I^0$ denotes the set $\{ \epsilon \}$ containing only the empty sequence.

We assume that all states $s$ of $M$ are \emph{reachable}. This means that for any $s\in S-\{\ii s\}$, there always exists $k > 0$, $s_1,\dots s_{k-1}$, $\bar x = x_1\dots x_k$ and $\bar y =  y_1\dots y_k$, such that 
formula~\eqref{eq:path} holds with $s_0 = \ii s$ and $s_k = s$. The initial state $\ii s$ is reached by the empty sequence $\epsilon$. If an initial definition of $M$ contains unreachable states, these can be identified by means a breadth-first-search starting at the initial state and visiting the direct successors of each state linked via $h_M$.

An FSM $M$ is called \emph{observable}, if for each  state $s$, input $x$ and output $y$ there exists at most one state $s'$ in $S$ such that $(s,x,y,s') \in h_M$. That is, the target state reached from some state with some input can be uniquely determined using the observed output. This property also extends to IO-sequences, as the state reached by an IO-sequence $\alpha \in \lang_M(s)$ applied to state $s$, denoted by $s\after \alpha$, is again uniquely determined. In the remainder of this paper, we assume every FSM to be observable, since there exist algorithms transforming a  non-observable FSM into  an observable one with the same 
language~\cite[Appendix~II]{luo_test_1994}.

An input $x$ is \textit{defined} in state $s$ if $out(s,x) \neq \varnothing$, and the set of all inputs defined in $s$ for $M$ is denoted $\definedInputs_M(s)$. 
An FSM is called \emph{completely specified} if and only if, 
$\definedInputs_M(s) = \Sigma_I$ for all states $s\in S$. 
Equivalently, this means that   $|out(s,x)| > 0$ for all $s\in S$ and $x\in\Sigma_I$. 
An FSM which is {\it not} completely specified is called \emph{partial}.
As we assume any FSM to be observable, we sometimes write $\definedInputs_M(\alpha)$ for an IO sequence $\alpha$, instead of 
$\definedInputs_M(\ii s\after \alpha)$. Note that $\definedInputs_M(\alpha) = \varnothing$ if $\alpha\not\in L(M)$.

A (completely specified or partial) FSM is called \emph{deterministic} if and only if $|out(s,x)| \le 1$ for all $s\in S$ and $x\in\Sigma_I$. This means that each output is uniquely determined by the current state and the 
selected input.

An FSM $I$ is called a \emph{reduction} of another FSM $M$, if both operate on the same input and output alphabets and $\lang(I) \subseteq \lang(M)$ holds. Reduction is a suitable conformance relation for ensuring safety properties: $\lang(I) \subseteq \lang(M)$ asserts that the implementation $I$ will never produce an IO sequence which is not in the language of $M$. Therefore, if the reference model $M$ is considered as safe, $I$ will be safe as well.

Reduction is usually considered in the context of nondeterministic reference 
models~\cite{petrenko_testing_2011,DBLP:conf/hase/PetrenkoY14,hierons_testing_2004} or if incomplete implementations~\cite{DBLP:journals/scp/Hierons19,DBLP:conf/fates/PetrenkoY05} are allowed. For deterministic, completely specified reference models $M$ it is easy to see that any completely specified  reduction $I$ of $M$ must already be language-equivalent to $M$: Since $I$ is complete, it has to accept any input sequence $\bar x \in \Sigma_I^*$. Since $M$ is deterministic, there is exactly one output sequence $\bar y$ fulfilling $\bar x/\bar y\in \lang(M)$. Because $\lang(I)\subset \lang(M)$ holds,
$\bar y$ is the {\it only} output sequence $I$ is allowed to produce  in reaction to $\bar x$. 
This proves
$\lang(I) = \lang(M)$. 

For partial nondeterministic FSMs, an additional conformance relation has been proposed in~\cite{DBLP:journals/scp/Hierons19,DBLP:conf/fates/PetrenkoY05,DBLP:journals/tse/Hierons17} which takes partiality into account\footnote{The definitions given in~\cite{DBLP:journals/scp/Hierons19,DBLP:conf/fates/PetrenkoY05,DBLP:journals/tse/Hierons17} slightly differ; we use here a definition which is equivalent to~\cite[Definition~5]{DBLP:journals/scp/Hierons19}.}: FSM $I$ with initial state $\ii u$ is a \emph{quasi-reduction} of $M$ with initial state $\ii s$ if and only if the following properties hold for
all $\alpha\in \lang(I)\cap \lang(M)$ and $x \in\definedInputs_M(\alpha)$.
\begin{align}
& x \in \definedInputs_I(\alpha) \\
&\{ y\in\Sigma_O~|~\alpha.(x/y)\in \lang(I) \} \subseteq \{ y\in\Sigma_O~|~\alpha.(x/y)\in \lang(M) \}
\end{align}
Quasi-reduction implies that the implementation $I$, after having run through an IO sequence $\alpha$ which is also contained in the language of the reference model,
 will accept {\it at least} the inputs   accepted by the reference model after $\alpha$. For all
inputs  $x$ accepted by $M$ after $\alpha$, the implementation
 will produce a subset of the outputs possible in $M$. For input sequences {\it not} accepted by $M$, $I$ may produce arbitrary behaviour. Obviously, quasi-reduction implies reduction and completeness of $I$ for the case where $M$ is completely specified. For partial reference FSMs $M$, however, neither quasi-reduction, nor   reduction implies the other.


\section{Motivating Example}
\label{sec:crexample}

In this section, an example of an FSM reference model is given which serves to motivate the new conformance 
relation defined in Section~\ref{sec:ssr} and to generate test cases using to the adapted and optimised 
state counting algorithm described in Section~\ref{sec:statecounting}.

\begin{figure}[ht]
    \begin{center}
        \includegraphics[width=.85\textwidth]{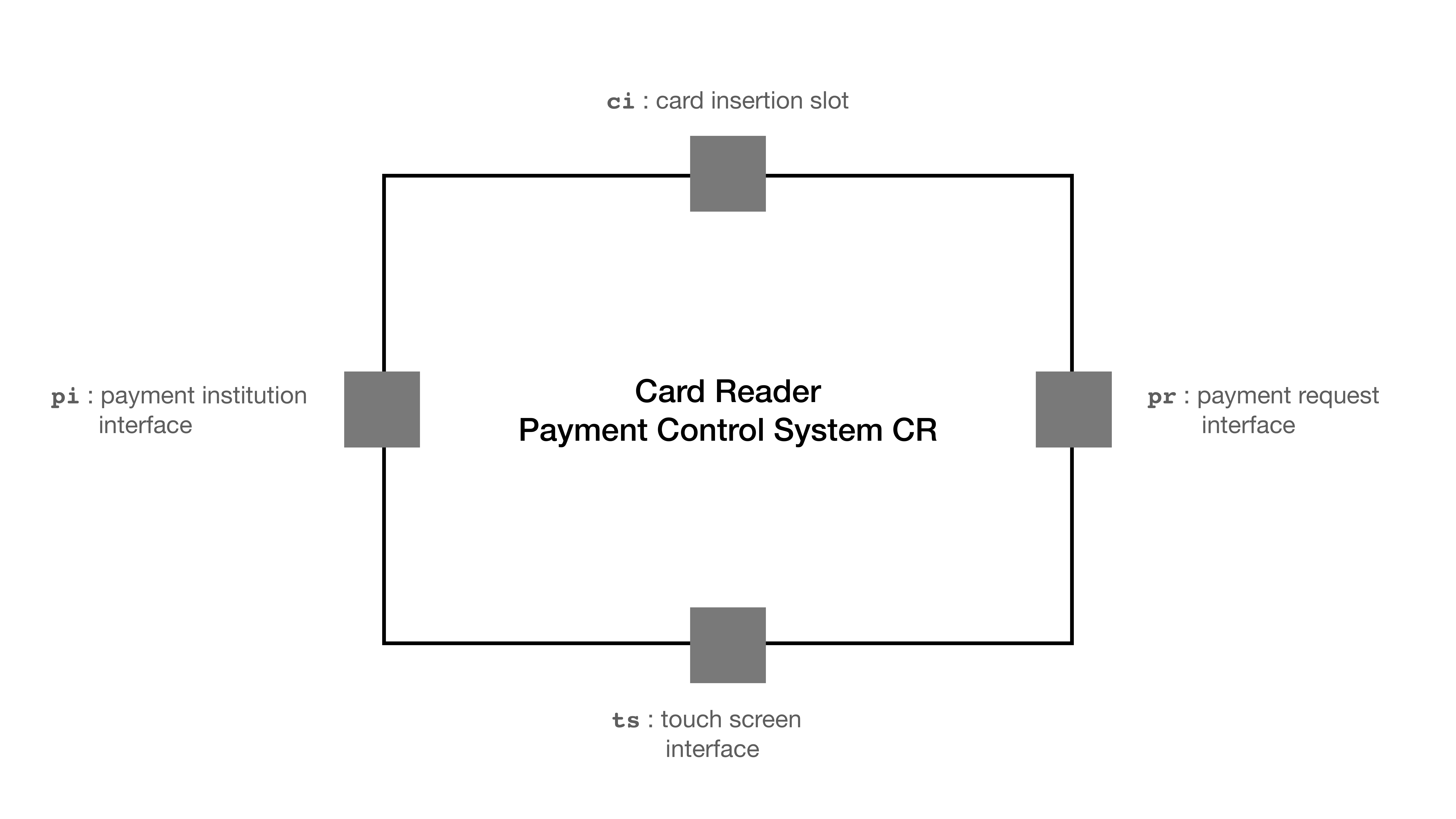}
    \end{center}
    \caption{CR interfaces.}
    \label{fig:crinterfaces}
\end{figure}

The \emph{Card Reader Payment Control System (CR)} is a standard device used, for example, in super markets or ticket vending machines to handle secure authorisation of payment amounts specified by electronic cash registers for vending machines. The CR interfaces   are shown in Fig.~\ref{fig:crinterfaces}. Interface {\tt ci} allows to insert a credit card which is then moved 
into the system so that it cannot be removed until  the transaction has been completed. 
At the end of the transaction, the card is ejected, and a sensor indicates whether it has been removed from the slot by its owner. 
Interface {\tt pr} receives payment requests from cash registers or vending machines. Interface {\tt pi} sends authorisations for the requested funds transfers from card holders' bank accounts to the vendors' accounts. Interface {\tt ts} represents a touch screen interface whose sub-interfaces (output text fields, input keypad, different kinds of buttons) change during the transaction.

\begin{table}[htp]
    \caption{Input alphabet $\Sigma_I^\crd$ of the card reader state machine $\crd$.}
    \begin{center}
        \begin{tabular}{|l|l|}\hline\hline
            {\bf Input} & {\bf Description} \\\hline\hline
            {\tt pr.A} & payment request for a large amount \\\hline
            {\tt pr.a} & payment request for a small amount \\\hline
            {\tt ci.in.v} & valid card insertion into the reader's slot \\\hline
            {\tt ci.in.i} & invalid card insertion into the reader's slot \\\hline
            {\tt ci.r} & removal of an ejected card \\\hline
            {\tt ts.in.ok} & authorise payment command on touch screen\\\hline
            {\tt ts.in.ab} & abort transaction command on touch screen\\\hline
            {\tt ts.in.vp} & entry of a valid PIN via touch screen\\\hline
            {\tt ts.in.ip} & entry of an invalid PIN via touch screen\\
            \hline\hline
        \end{tabular}
    \end{center}
    \label{table:input}
\end{table}%

\begin{table}[htp]
    \caption{Output alphabet $\Sigma_o^\crd$  of the card reader state machine $\crd$.}
    \begin{center}
        \begin{tabular}{|l|l|}\hline\hline
            {\bf Output} & {\bf Description} \\\hline\hline
            {\tt ts.out.ic} & insert-card request on touch screen  \\\hline
            {\tt ts.out.aut} & request to authorise payment amount on touch screen  \\\hline
            {\tt ts.out.p} & request PIN entry on touch screen  \\\hline
            {\tt ts.out.ip} & `invalid PIN' message with request to re-enter PIN on touch screen  \\\hline
            {\tt ts.out.cw} & `Card withdrawn' message on touch screen  \\\hline
            {\tt ts.out.clr} & clear touch screen  \\\hline
            {\tt ci.out} & card is ejected (remains still in the slot)  \\\hline
            {\tt pi.aut} & payment authorisation message   \\
            \hline\hline
        \end{tabular}
    \end{center}
    \label{table:output}
\end{table}%

The formal behavioural specification of the CR is modelled by the finite state machine 
$\crd = (S_\crd,{\tt init},\Sigma_I^\crd, \Sigma_O^\crd,h_\crd)$
specified in Fig.~\ref{fig:crbehavior}. Its input alphabet is specified in Table~\ref{table:input}, and its output alphabet in Table~\ref{table:output}.

While no payment request is present (state {\tt init}), the insertion of valid or invalid credit cards ({\tt ci.in.v}, {\tt ci.in.i}) leads immediately to their ejection ({\tt ci.out}). They have to be removed ({\tt ci.r}) before the system returns to its initial state {\tt init}, where it is ready to accept the next payment request. 

When a payment request arrives ({\tt pr.a}, {\tt pr.A}), users are requested via touch screen output {\tt ts.out.ic} to insert their credit card. The concrete payment amounts that are requested are abstracted in the input alphabet of the FSM   to large amounts ({\tt pr.A}) and small amounts ({\tt pr.a}).

Card insertion is abstracted to FSM inputs {\tt ci.in.v} for insertion of a valid credit card and
{\tt ci.in.i} for an invalid card. Invalid cards are  ejected again from the card insertion slot ({\tt ci.out}). After removal of the card ({\tt ci.r}), the CR resumes its initial state.

After a valid card has been inserted, a request to authorise the payment amount is displayed on the touch screen ({\tt ts.out.aut}). Now it becomes possible to give touch screen commands `authorise payment' ({\tt ts.in.ok}), or
`abort transaction' ({\tt ts.in.ab}).
A transaction abort leads to ejection of the card and return to the initial FSM state, after the card has been removed as described above. 

After authorisation of the amount, the behaviour depends on the payment amount to be authorised. (1) If it is a large amount, the entry of the card's PIN number is requested ({\tt ts.out.p}). After a valid PIN entry 
({\tt ts.in.vp}), the
card is ejected, and an authorisation message ({\tt pi.aut})
is sent to the payment institution, after the card has been removed.
(2) If a small amount has been authorised, a nondeterministic decision is performed: either, 
the card is immediately ejected, and an authorisation message  is sent to the payment institution, after the card has been removed, or the PIN entry is requested just as for large amounts to be paid.

In any case, the PIN number entry is abstracted to inputs `valid PIN' ({\tt ts.in.vp}) and 
`invalid PIN' ({\tt ts.in.ip}) in the CR model. If an invalid PIN is entered,
a second and third input is possible. 
While trying to enter a new PIN, it is always possible to abort the transaction ({\tt ts.in.ab}).
After three invalid inputs, the card is withdrawn by the CR ({\tt ts.out.cw}), and the initial state is resumed.

\begin{figure*}[ht]
    \begin{center}
        \includegraphics[width=\textwidth]{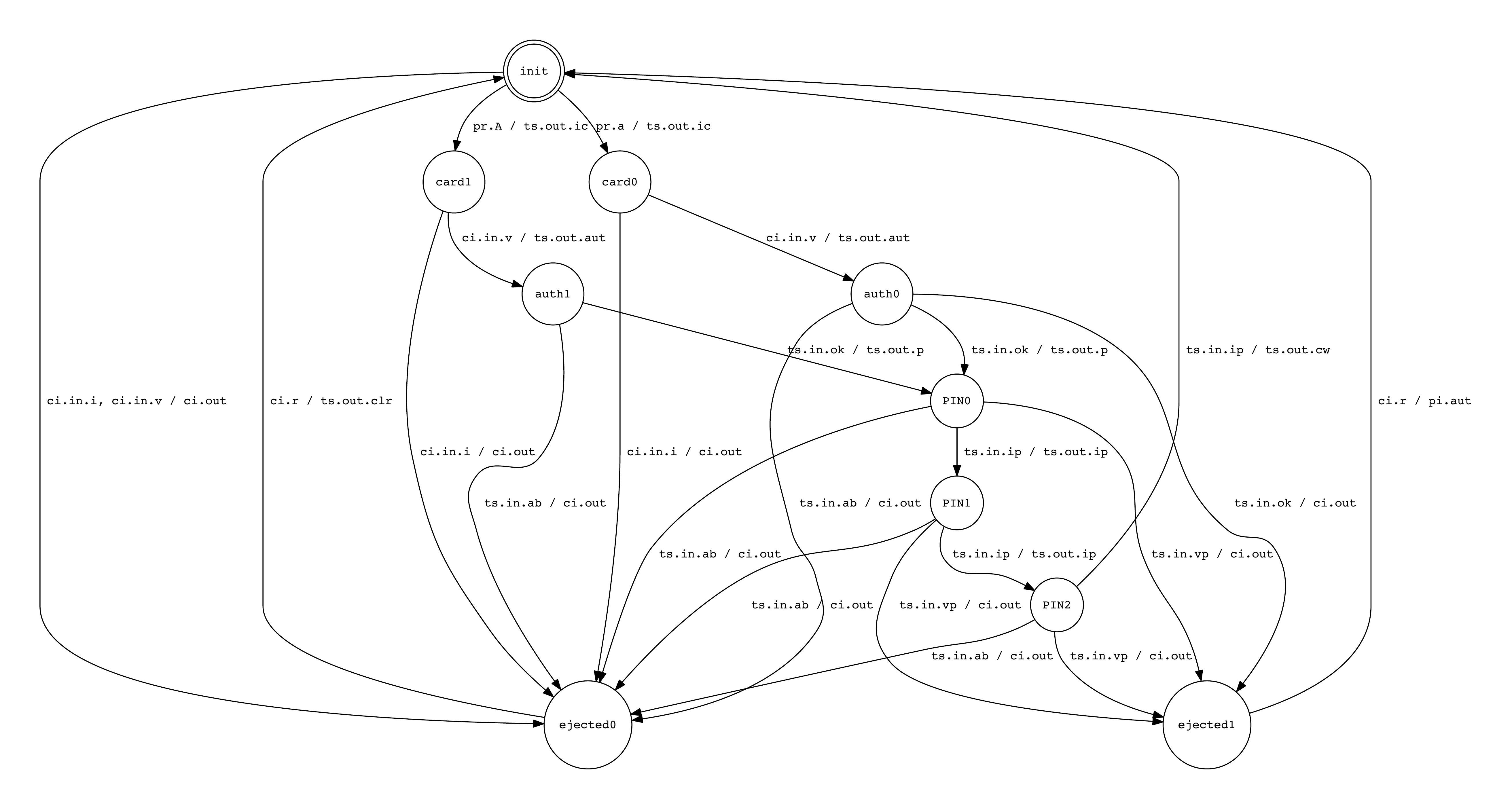}
    \end{center}
    \caption{FSM model $\crd = (S_\crd,{\tt init},\Sigma_I^\crd, \Sigma_O^\crd,h_\crd)$ of the card reader payment control system.}
    \label{fig:crbehavior}
\end{figure*}

As described above and modelled in Fig.~\ref{fig:crbehavior}, the FSM describing the CR behaviour is not completely specified. The unspecified inputs in each state can be separated into two classes.
\begin{itemize}
    \item \textbf{Ignored inputs:} The unspecified input is just an abbreviation for a self-loop transition labelled by this input with an auxiliary member of the output alphabet indicating `no output'. 
    
    \item \textbf{Disabled inputs:} It is impossible to provide this input in the state under consideration.  
\end{itemize}
Table~\ref{table:acc} specifies the input events that are ignored or disabled  in each state. 
For example,   input {\tt ci.r} is disabled in all   states $s$ but {\tt ejected0} and {\tt ejected1}, since in these $s$, either no card is present, or the card has been moved into the system, so that it is 
mechanically impossible to remove it. Similarly, it is impossible to insert a card in any state but {\tt init}, because it is impossible to insert a second card while there is already one present. On the touch screen display, input {\tt ts.in.ok} is only possible in states {\tt auth1} and {\tt auth2}, because the `ok' button is not displayed in the other states and can therefore not be pressed. In contrast to these disabled events,
the inputs on interface {\tt pr} are just ignored in all states but {\tt init}.

\begin{table}[htp]
    \footnotesize
    \caption{State-dependent ignored and disabled inputs  of the CR state machine.}
    \begin{center}
        \begin{tabular}{|l|l|l|}\hline\hline
            {\bf State} & {\bf Ignored inputs} & {\bf Disabled inputs} \\\hline\hline
            {\tt init} & $\varnothing$ & $\{ {\tt ci.r}, {\tt ts.in.ok}, {\tt ts.in.ab},{\tt ts.in.vp}, {\tt ts.in.ip}  \}$ \\\hline
            {\tt card0} & $\{ {\tt pr.a}, {\tt pr.A}\}$ & $\{ {\tt ci.r}, {\tt ts.in.ok}, {\tt ts.in.ab}, {\tt ts.in.vp}, {\tt ts.in.ip} \}$ \\\hline
            {\tt card1} & $\{ {\tt pr.a}, {\tt pr.A}\}$ & $\{ {\tt ci.r}, {\tt ts.in.ok}, {\tt ts.in.ab}, {\tt ts.in.vp}, {\tt ts.in.ip} \}$ \\\hline
            {\tt auth0} & $\{ {\tt pr.a}, {\tt pr.A}\}$ & $\{ {\tt ci.r}, {\tt ci.in.v}, {\tt ci.in.i},   {\tt ts.in.vp}, {\tt ts.in.ip} \}$ \\\hline
            {\tt auth1} & $\{ {\tt pr.a}, {\tt pr.A}\}$ & $\{ {\tt ci.r}, {\tt ci.in.v}, {\tt ci.in.i},   {\tt ts.in.vp}, {\tt ts.in.ip} \}$ \\\hline
            {\tt ejected0} & $\{ {\tt pr.a}, {\tt pr.A}\}$ & $\{   {\tt ci.in.v}, {\tt ci.in.i}, {\tt ts.in.ok}, {\tt ts.in.ab}, {\tt ts.in.vp}, {\tt ts.in.ip} \}$ \\\hline
            {\tt ejected1} & $\{ {\tt pr.a}, {\tt pr.A}\}$ & $\{  {\tt ci.in.v}, {\tt ci.in.i}, {\tt ts.in.ok}, {\tt ts.in.ab}, {\tt ts.in.vp}, {\tt ts.in.ip} \}$ \\\hline
            {\tt PIN0} & $\{ {\tt pr.a}, {\tt pr.A}\}$ & $\{ {\tt ci.r}, {\tt ci.in.v}, {\tt ci.in.i},   {\tt ts.in.ok}  \}$ \\\hline
            {\tt PIN1} & $\{ {\tt pr.a}, {\tt pr.A}\}$ & $\{ {\tt ci.r}, {\tt ci.in.v}, {\tt ci.in.i},   {\tt ts.in.ok}  \}$ \\\hline
            {\tt PIN2} & $\{ {\tt pr.a}, {\tt pr.A}\}$ & $\{ {\tt ci.r}, {\tt ci.in.v}, {\tt ci.in.i},   {\tt ts.in.ok}  \}$ \\
            \hline\hline
        \end{tabular}
    \end{center}
    \label{table:acc}
    \normalsize
\end{table}%

\section{Strong Reduction -- a new Conformance Relation}\label{sec:ssr}

The example presented above  -- though being quite realistic and practical -- shows that none of the established conformance relations are suitable for an implementation $I$ of the reference FSM $\crd$ presented
in Section~\ref{sec:crexample}.

\noindent
(A) Language equivalence is not suitable as a conformance relation: an implementation of the nondeterministic decision whether to require a PIN entry for authorising a small payment will be implemented in a way which guarantees that at least one authorisation request will be made within a limited number of payments with small amounts. The reference model $\crd$ allows for unbounded sequences of small amount payments without PIN request.

\noindent
(B) Reduction is not suitable because it would allow for an implementation corresponding to an empty FSM just
``executing'' the empty IO sequence.

\noindent
(C) Quasi-reduction is not suitable because it would allow implementations that accept inputs disabled by the reference model and exhibit arbitrary behaviour after these inputs. For example,  an implementation could accept payment authorisations in the initial state, without a card having been inserted, and still be a quasi-reduction of $\crd$.

A suitable conformance relation for the example from  Section~\ref{sec:crexample} and -- more general -- for reactive systems involving user interfaces with state-dependent input interfaces should have the following properties.
\begin{enumerate}
\item The implementation shall not exhibit any behaviours disallowed by the reference model.
\item All input sequences allowed by the reference model shall also be allowed by the implementation.
\end{enumerate}

These consideration lead to the following new definition.
\begin{definition}
    FSM $I$ is a \textit{strong reduction} of $M$, denoted $I \strongred M$, if the following holds:
    \begin{align}
        \label{eq:isred}
        &\lang(I) \subseteq \lang(M) \\
        &\wedge \ \forall \alpha \in \lang(I): \definedInputs_I(\alpha) = \definedInputs_M(\alpha)
    \end{align} 
\end{definition}
Note that  Condition~\eqref{eq:isred}   requires that $I$ is a reduction of $M$. 

It should be emphasised that all  the conformance relations discussed above have their specific applications, so there is no ``best'' relation rendering the others superfluous: (A) Language equivalence is typically used when no degrees of freedom should be left for the implementation, that is, when exactly the specified behaviour should be implemented, neither more, nor less. (B) Reduction is typically used to verify that a detailed implementation model still satisfies the safety properties of a reference model. This means that the requirement {\it ``the implementation shall be a reduction of $M$''} is never used as the only requirement for the system to be built, but as a safety-related additional postulate. (C) Quasi-reduction is the conformance relation to be chosen when 
dealing with {\it incomplete} specification models. The absence of an input in a certain state has the meaning {\it ``we do not know what happens here, since this will be determined (later) in another partial reference model''}. (D) Strong reduction is chosen if the reference model is partial because certain inputs cannot happen in certain situations. Typically, this occurs in complex user interfaces, where certain buttons to be pressed only occur in specific system states. Also, as exemplified above, inputs may be impossible due to mechanical reasons.

In the remainder of this paper let FSM $M$ represent the reference model to test against and  assume that the SUT behaves like an unknown member $I$ of the \textit{fault domain} $\faultDomain$, which is the set of all observable  FSMs of size at most equal to some fixed $m \geq |M|$.

%
%
%
%

\section{A Modified State Counting Strategy for Strong Reduction Tests}\label{sec:statecounting}

\subsection{Overview}
In this section, we will elaborate a novel    strategy which is optimised for testing against the strong reduction conformance relation. It will turn out that this -- while still being complete -- leads to significantly fewer test cases, compared to test suites created using conventional reduction testing strategies.   The new strategy is a substantial adaptation of the well-known \emph{state counting} methods investigated, for example, in~\cite{petrenko_testing_1996,hierons_testing_2004}.  
The test strategy described in \cite{hierons_testing_2004} generates test suites by extending a state cover of $M$ by traversal sets, and then extending the resulting sequences by characterisation sets designed to test whether sequences that reach reliably distinguishable states in $M$ also reach distinct states in $I$. We adapt this strategy by modifying the definitions of deterministic reachability and reliable distinguishability in the context of partial FSMs. Moreover, we reduce the test suite extension with 
characterising  sets: this is achieved by checking on-the-fly during test generation, whether the test suite created so far already contains a suitable distinguishing sequence for the actual state pair under consideration. This technique
is an adaptation of the one introduced for the H-Method~\cite{DBLP:conf/forte/DorofeevaEY05} used in testing for (quasi-)equivalence between FSMs.

\subsection{Test Cases and Pass-Relation}
To simplify the presentation, we consider here test suites $T \subseteq \Sigma_I^*$ consisting of input sequences only, but the strategy presented here can easily be adapted to produce so called adaptive test cases (see \cite{DBLP:conf/hase/PetrenkoY14}). We say that $I$ \textit{passes} a test case $\bar{x} \in T$ if for all prefixes $\bar{x}_1 \in \Pref(\bar{x})$ and IO sequences $\bar{x}_1/\bar{y}_1 \in \lang(I)$ it holds that $\bar{x}_1/\bar{y}_1 \in \lang(M)$ and $\definedInputs_I(\bar{x}_1/\bar{y}_1) = \definedInputs_M(\bar{x}_1/\bar{y}_1)$. Note here that if $I$ passes $\bar{x}$, then $I$ also passes all prefixes of $\bar{x}$. We say that $I$ passes $T$ if $I$ passes all $\bar{x} \in T$.

\subsection{Complete Test Suites}

Black-box or grey-box tests do not allow for an inspection of all internal aspects (e.g.~states) of the SUT during a test execution. Therefore, it is not possible to guarantee that a test suite will reveal every conformance violation of every implementation without imposing additional hypotheses. The latter are usually specified by means of a \emph{fault domain} ${\cal F}$ which consists of a set of FSMs that may or may not conform to the reference model~\cite{DBLP:journals/tse/Hierons17}. It is then assumed that the true behaviour of the SUT is equivalent to one FSM model $I$ contained in the fault domain.

A test suite $T$ is \emph{sound} with respect to a given reference model $M$, conformance relation $\le$, and fault domain ${\cal F}$, if  
every SUT whose behaviour is equivalent to some FSM $I\in{\cal F}$ conforming to $M$ (i.e.~$I\le M$) passes every test in $T$.
A test suite $T$ is \emph{exhaustive} with respect to $M$, $\le$, and ${\cal F}$, if  
every SUT whose behaviour is equivalent to some FSM $I\in{\cal F}$ {\it not} conforming to $M$ (i.e.~$I\not\le M$) fails at least one  test in $T$.
A test suite $T$ is \emph{complete} with respect to $M$, $\le$, and ${\cal F}$, if it is both sound and exhaustive.

In this article, we consider the so-called \emph{$m$-completeness} which denotes completeness with respect to
the fault domain ${\cal F}(\Sigma_I,\Sigma_O,m)$ of all state machines over the same input alphabet $\Sigma_I$ and output alphabet $\Sigma_O$ as the reference model, and with at most $m$ states. In the following, will abbreviate ${\cal F}(\Sigma_I,\Sigma_O,m)$ to $\faultDomain$, as $\Sigma_I$ and $\Sigma_O$ are uniquely determined by the reference model.

\subsection{Test Oracles}\label{sec:oracles}

For practical testing, we adopt the usual \emph{fairness assumption} (sometimes called \emph{complete testing assumption})~\cite{hierons_testing_2004}: we assume the existence of a known constant $k\in \mathbf N$, such that a nondeterministic SUT will exhibit {\it every} possible behaviour in response to input sequence $\bar x$, if $\bar x$ is executed at least $k$ times against the SUT. 
Moreover, we assume that a simulation of the reference model $M$ exists which can run through exactly the traces from $L(M)$ and provides the set $\Delta_M(s)$ of inputs accepted in each state $s$.
During   execution of a test case $\bar x$, the simulation is run in back-to-back fashion with the SUT, so that 
the outputs produced by the SUT as reaction to each input can be compared to the outputs possible for the reference model. Moreover, we assume that the tests are \emph{grey-box tests} in the sense that the SUT 
reveals its accepted inputs   $\Delta_I(s')$ in every SUT state $s'$. 

Observe that the grey-box test assumption is quite realistic in many testing scenarios: (1) For software testing of graphical user interfaces, the graphical elements like buttons or input text fields can  be checked by the test harness with respect to visibility. (2) Hardware interfaces like the card insertion slot from our main example in Section~\ref{sec:crexample} directly reveal whether an input is mechanically enabled or disabled.


\subsection{Deterministically Reachable States}\label{sec:dreach}

As $M$ may be both nondeterministic and partial, an input sequence $\bar{x} \in \Sigma_I^*$ may reach between zero and $|M|$ states of $M$. Furthermore, $\bar{x}$ may reach fewer states in $I$ than in $M$ even if $I$ conforms to $M$.
To ensure that each sequence $\bar{x}$ in a state cover reaches exactly one state in $M$, we consider only certain input sequences. We say that an input sequence $\bar{x}$ \textit{deterministically reaches (d-reaches)} a state $s \in S$ if $s$ is the only state reached by $\bar{x}$ in $M$ and $\bar{x}$ is also \textit{strongly defined} in $M$, which requires that for any prefix $\bar{x}_1x$ of $\bar{x}$ and IO sequence $\bar{x}_1/\bar{y}_1 \in \lang(M)$ input $x$ is defined in $\ii s\after \bar{x}_1/\bar{y}_1$. That is, $\bar{x}$ d-reaches $s$ if $\bar{x}$ reaches $\{s\}$ in any strong reduction $M'$ of $M$ that is created by removing transitions from $M$. We say that state $s$ is \textit{d-reachable} if there exists an input sequence that d-reaches $s$.
Note that the initial state of any machine $M$ is always deterministically reachable by the empty input sequence.

A \textit{state cover} $V \subseteq \Sigma_I^*$ of $M$ then is a minimal set of input sequences such that for any d-reachable state $s \in S$, set $V$ contains some $\bar{v}$ that d-reaches $s$. In particular, we require $V$ to contain the empty input sequence $\epsilon$, which d-reaches $\ii s$. To calculate a state cover for a possibly nondeterministic and incomplete FSM $M$ we modify the procedure described in \cite{petrenko_testing_1996}: First we delete the outputs on the transitions of $M$ and complete the result by adding a new state $s_\bot \notin S$, a transition to $s_\bot$ for each $s \in  S$ and $x \in \Sigma_I$ such that $x \notin \definedInputs_M(s)$, and a self-loop on $s_\bot$ for all $x \in \Sigma_I$. Next, we determinise this automaton using standard techniques \cite{hopcroft1979introduction}. Then, state $s$ of $M$ is d-reachable by any input sequence $\bar{x}$ that reaches $\{s\}$ in the determinised automaton and thus we finally create $V$ by selecting for each d-reachable $s \in S$ one such input sequence, in particular selecting $\epsilon$ for $\ii s$.

Consider for example FSM $M_{ex}$ given in Fig.~\ref{fig:syntheticExample} with input alphabet $\{a,b\}$ and output alphabet $\{0,1,2,3\}$. The initial state of this FSM behaves nondeterministically on any given input, but state $s_2$ can still be deterministically reached for example by sequence $a.b$, as $a$ is defined in the initial state, $b$ is defined in any state reached by $a$ and applying $b$ to any such state reaches $s_2$. This can be verified using the technique described above for the calculation of a state cover, which results in the determinised automaton given in Fig.~\ref{fig:syntheticExampleDet}. In this automaton, sequence $a.b$ reaches $\{s_2\}$ and thus d-reaches $s_2$ in $M_{ex}$. The automaton also shows that neither $s_1$ nor $s_3$ can be d-reached, and that sequences such as $a.a.b$ that reach only a single state in $M_{ex}$ are not selected by the technique because 
they are not strongly defined and thus reach states containing $s_\bot$ in the determinised automaton. As a result, $V_{ex} = \{ \epsilon, a.b \}$ is a state cover of $M_{ex}$.

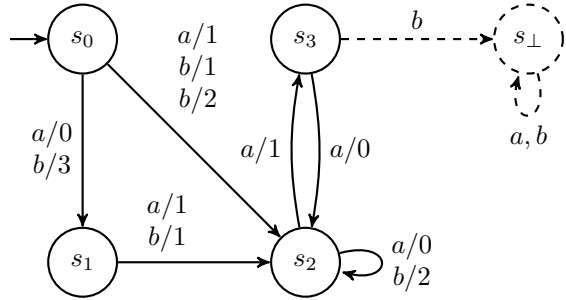
\begin{figure}[ht]
    \begin{center}
        \scalebox{1.0}{
            \begin{tikzpicture}[->,>=stealth',auto,node distance=2cm,thick]    
    \node[state,initial by arrow, initial text={}]         (q0) [] {$s_0$};
    \node[state]         (q1) [below=2cm of q0 ] {$s_1$};
    \node[state]         (q2) [right=2cm of q1 ] {$s_2$};
    \node[state]         (q3) [right=2cm of q0 ] {$s_3$};
    \node[state,dashed]         (qB) [right=2cm of q3 ] {$s_\bot$};
    
    \path 
    (q0)  edge    [align=center]  node[xshift=-10pt,yshift=10pt] {$a/1$ \\ $b/1$ \\ $b/2$} (q2)
    (q0)  edge    [align=center, swap]  node {$a/0$ \\ $b/3$} (q1)
    (q1)  edge    [align=center]  node[xshift=-10pt] {$a/1$ \\ $b/1$} (q2)
    (q2)  edge    [loop right,align=center]  node {$a/0$ \\ $b/2$} (q2)
    (q2)  edge    [bend left=10]  node {$a/1$} (q3)
    (q3)  edge    [bend left=10]  node {$a/0$} (q2)
    (q3)  edge    [dashed]  node {$b$} (qB)
    (qB)  edge    [loop below, dashed]  node {$a,b$} (qB)
    ;
\end{tikzpicture}
        }        
    \end{center}
    \caption{Example FSM $M_{ex}$. The additional state $s_\bot$ and transitions for undefined inputs used in the reachability analysis are rendered using dashed lines.}
    \label{fig:syntheticExample}
\end{figure}

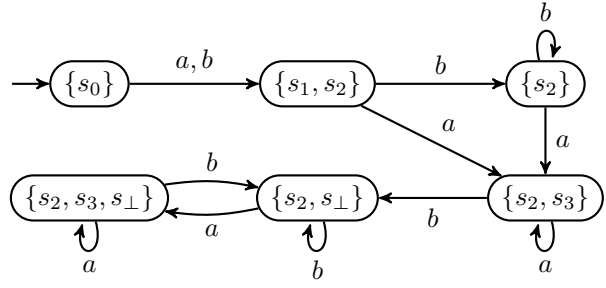
\begin{figure}[ht]
    \begin{center}
        \scalebox{1.0}{
            \begin{tikzpicture}[->,>=stealth',auto,node distance=3cm,thick]    
    \node[draw, rounded rectangle,initial by arrow, initial text={}]         (0) [] {$\{s_0\}$};
    \node[draw, rounded rectangle]         (12) [right=of 0.center, anchor=center ] {$\{s_1,s_2\}$};
    \node[draw, rounded rectangle]         (2) [right=of 12.center, anchor=center ] {$\{s_2\}$};
    \node[draw, rounded rectangle]         (23) [below=1.5cm of 2.center, anchor=center ] {$\{s_2,s_3\}$};
    \node[draw, rounded rectangle]         (2B) [left=of 23.center, anchor=center ] {$\{s_2,s_\bot\}$};
    \node[draw, rounded rectangle]         (23B) [left=of 2B.center, anchor=center ] {$\{s_2,s_3,s_\bot\}$};

    \path 
    (0)  edge    []  node {$a,b$} (12)
    (12)  edge    []  node {$b$} (2)
    (12)  edge    []  node {$a$} (23)
    (2)  edge    [loop above]  node {$b$} (2)
    (2)  edge    []  node {$a$} (23)
    (23)  edge    [loop below]  node {$a$} (23)
    (23)  edge    []  node {$b$} (2B)
    (2B)  edge    [loop below]  node {$b$} (2B)
    (2B)  edge    [bend left=10]  node {$a$} (23B)
    (23B)  edge    [loop below]  node {$a$} (23B)
    (23B)  edge    [bend left=10]  node {$b$} (2B)
    ;
\end{tikzpicture}
        }        
    \end{center}
    \caption{Determinised reachability automaton for $M_{ex}$}
    \label{fig:syntheticExampleDet}
\end{figure}

In the following we will write $\widehat{S'}$ to denote the set of all d-reachable states in some  state set 
$S' \subseteq S$. For a given state cover $V$ of $M$, we assign to each $s \in \widehat{S}$ a unique sequence $\bar{v}_s \in V$ that d-reaches $s$. Furthermore, we write $V'$ to denote the set of all responses of $M$ to $V$, that is $V' = \{ \bar{x}/\bar{y} \in \lang(M)~|~\bar{x} \in V \}$.

For the CR state machine from Fig.~\ref{fig:crbehavior}, such an assignment of unique d-reaching sequences can be chosen as given in Table~\ref{table:card-reader-state-cover}, resulting in a state cover $V_\crd = \{\bar{v}_s~|~s \in S_\crd \}$. The state cover of CR contains {\it every} state, because each state is d-reachable 
(therefore, $\widehat{S_\crd} = S_\crd$), as can
be easily seen from inspection of the FSM diagram in Fig.~\ref{fig:crbehavior}.

\begin{table}[htp]
    \footnotesize
    \caption{Elements of a state cover of the CR state machine}
    \begin{center}
        \begin{tabular}{|l|l|}\hline\hline
            {\bf State $s$} & {\bf d-reaching sequence $\bar{v}_s$} \\\hline\hline
            init & $\epsilon$ \\
            card0 & (pr.a) \\
            card1 & (pr.A) \\
            auth0 & (pr.a).(ci.in.v)\\
            auth1 & (pr.A).(ci.in.v)\\
            PIN0 & (pr.A).(ci.in.v).(ts.in.ok)\\
            PIN1 & (pr.A).(ci.in.v).(ts.in.ok).(ts.in.ip)\\
            PIN2 & (pr.A).(ci.in.v).(ts.in.ok).(ts.in.ip).(ts.in.ip)\\
            ejected0  & (ci.in.i)\\
            ejected1  & (pr.A).(ci.in.v).(ts.in.ok).(ts.in.vp)\\
            \hline\hline
        \end{tabular}
    \end{center}
    \label{table:card-reader-state-cover}
    \normalsize
\end{table}%

\subsection{Reliably Distinguishable States}\label{sec:rdis}


We will now introduce the first significant change of the classical state counting method: this concerns the distinguishability of states. 
In classical state counting, it is necessary to apply at least a single input $x$ to distinguish states $s_1, s_2$, where $x$ \textit{reliably} distinguishes the states if the sets of outputs observed on applying $x$ to $s_1$ and $s_2$ are disjoint. 
This distinction is reliable in the sense that it does not depend on the occurrence of a nondeterministic output. The concept of reliable distinguishability (\textit{r-distinguishability}) is then extended inductively such that $s_1,s_2$ are reliably distinguishable by input sequences up to length $(k+1)$ if they can be reliable distinguished by a single input or if there exists some input $x$ such that for all responses $y$ observed for both $s_1$ and $s_2$ to $x$, the states reached by applying $x/y$ to $s_1$ and $s_2$ are reliably distinguishable by input sequences up to length $k$.
Our definition takes into account that two states are immediately distinguishable if their accepted inputs $\Delta_M(s_1), \Delta_M(s_2)$ differ. This distinction is again reliable, as no nondeterminism is encountered. It can thus be integrated into the classical definition of reliable distinguishability, possibly reducing the number of inputs required to distinguish states.
These considerations lead to the following definition.

\begin{definition}\label{def:rdis}
    States $s_1$ and $s_2$ of $M$ are \textit{r(0)-distinguishable} if $\definedInputs_M(s_1) \not= \definedInputs_M(s_2)$ holds. States $s_1$ and $s_2$ of $M$ are \textit{r(k+1)-distinguishable} for $k \geq 0$, if they are r(k)-distinguishable or if there exists some $x \in \definedInputs_M(s_1) \cap \definedInputs_M(s_2)$ such that $out(s_1,x) \cap out(s_2,x) = \varnothing$ or for all $y \in out(s_1,x) \cap out(s_2,x)$  it holds that $s_1\after x/y$ and $s_2\after x/y$ are r(k)-distinguishable. States $s_1$ and $s_2$ of $M$ are \textit{r-distinguishable} if there exists some $k \in \mathbb{N}$ such that $s_1$ and $s_2$ are r(k)-distinguishable.
\end{definition}

Note that this definition of r-distinguishability extends the original definition given in \cite{hierons_testing_2004} by a new base case of r(0)-distinguishability for states that differ in their defined inputs, as motivated above. The base case of the definition used in \cite{hierons_testing_2004}, r(1)-distinguishability, is retained, as the extended definition considers states r(1)-distinguishable if they are r(0)-distinguishable or if there exists some input defined in both states for which the states generate disjoint sets of outputs.

This definition immediately leads to the following notion of sets of input sequences whose application is sufficient to establish r-distinguishability:

\begin{definition}\label{def:rdisset}
    Let $W \subseteq \Sigma_I^*$ be some set of input sequences. Then $W$ \textit{r(0)-distinguishes} any pair of r(0)-distinguishable states of $M$. Furthermore, for $k \geq 0$, $W$ \textit{r(k+1)-distinguishes} states $s_1$ and $s_2$ of $M$ if $W$ already r(k)-distinguishes them or if there exist some $x \in \definedInputs_M(s_1) \cap \definedInputs_M(s_2) \cap \Pref(W)$ such that for all $y \in out(s_1,x) \cap out(s_2,x)$ there exists some $W'$ such that $\{x\}.W' \subseteq \Pref(W)$ holds and $W'$ r(k)-distinguishes $s_1\after x/y$ and $s_2\after x/y$.
\end{definition}

It is trivial to see that the following properties are equivalent:
\begin{enumerate}
\item States $s_1$ and $s_2$ are r-distinguishable according to Definition~\ref{def:rdis}.
\item There exists $W\subseteq \Sigma_I^*$ r-distinguishing $s_1$ and $s_2$ according to Definition~\ref{def:rdisset}.
\end{enumerate}

FSM $M_{ex}$ given in Fig.~\ref{fig:syntheticExample} exhibits several distinct examples of r-distinguishability. For example, state $s_3$ can be r(0)-distinguished from any other state, as $s_3$ is the only state in which input $b$ is not defined. Furthermore, states $s_1$ and $s_2$ are r(1)-distinguishable, as there exists no output produced by both states in response to $b$. Next, states $s_0$ and $s_2$ are r(2)-distinguishable using input $a$, as $out(s_0,x) \cap out(s_2,a) = \{0,1\}$ and the states reached from $s_0$ and $s_2$ via $a/0$ and $a/1$ are respectively r(1)-distinguishable as described above. Finally, states $s_0$ and $s_1$ are not r-distinguishable, as for any input $x \in \{a,b\}$ both states reach $s_2$ via $x/1$ and no state can be r-distinguished from itself.

R-distinguishing sets can be computed based on the inductive definition of r-distinguishing sets, as described for complete FSMs in~\cite{petrenko_testing_1996} and, using adaptive tests, \cite{petrenko_testing_2011}. Function $\textsc{CollectRDSets}(M)$, described in Fig.~\ref{alg:CollectRDSets}, provides an extension of such algorithms that also considers  r(0)-distinguishability in computing pairs of state pairs and sets of input sequences such that $(\{s_1,s_2\},W)$ is contained in the return value only if $s_1$ and $s_2$ are r-distinguished by $W$ in $M$.

To do so, the algorithm  first initialises set $R$ in line 2 by  assigning to each pair of r(0)-distinguishable states the r-distinguishing empty set $W = \varnothing$.
The  $\mathbb{P}(S)$-valued auxiliary variable $P$ contains all state pairs to which no r-distinguishing set has been assigned already. Consequently, $P$ is initialised in line~3 to contain all state pairs that have not been
captured in $R$, since they are not $r(0)$-distinguishable.

Thereafter, the algorithm iteratively checks for any pair $\{s_1,s_2\} \in P$,   whether there exists some input $x$ defined in both states such that for all $y \in out(s_1,x) \cap out(s_2,x)$ there exists some $W_y$ assigned to $\{s_1\text{-after-}x/y,s_2\text{-after-}x/y\}$ in $R$. If such an $x$ exists, then an r-distinguishing set for $s_1$ and $s_2$ is created in line 12 by extending $x$ with $W_y$ for all $y \in out(s_1,x) \cap out(s_2,x)$ and assigning the resulting set to $\{s_1,s_2\}$ in $R$.\footnote{The fact that this resulting set is indeed r-distinguishing $s_1$ and $s_2$ can be shown via induction on the number of previous iterations, using as base case the assignment of $\varnothing$ to all r(0)-distinguishable pairs.} If no such $x$ exists, then $\{s_1,s_2\}$ is to be considered again in the next iteration. 

The algorithm terminates returning $R$, if all pairs of states of $M$ have been assigned some r-distinguishing set (in this case, auxiliary variable $P$ is empty), or if in some iteration no r-distinguishing set could be assigned to any element of $P$, in  which case the remaining pairs in $P$ are not r-distinguishable. Thus, if $s_1$ and $s_2$ are r-distinguishable in $M$, then there exists some $W$ such that $(\{s_1,s_2\},W)$ is contained in the return value of $\textsc{CollectRDSets}(M)$.
Note that this algorithm always terminates, as $P$ is finite and each iteration after which the algorithm does not immediately terminate must remove at least one element of $P$.

\begin{figure}
    \begin{algorithmic}[1]
        \Function{CollectRDSets}{$M= (S,\ii s,\Sigma_I,\Sigma_O,h_M)$} : $\mathbb{P} (\mathbb{P}(S)\times \mathbb{P}(\Sigma_I^*))$
        \State $R \gets \{ (\{s_1,s_2\},\varnothing)|s_1,s_2 \in S \wedge \definedInputs_M(s_1) \neq \definedInputs_M(s_2)\}$
        \State $P \gets \{ \{s_1,s_2\}~|~s_1,s_2 \in S \wedge (\{s_1,s_2\},\varnothing) \notin R  \}$
        \State $\text{\it changed} \gets \text{\it True}$ 
        \While{$P \neq \varnothing \wedge \text{\it changed}$}
            \State $\text{\it changed} \gets \text{\it False}$ \Comment{No change in $P$ yet} 
            \ForAll{$\{s_1,s_2\} \in P$}
                \State $X \gets \{x \in \definedInputs_M(s_1) \cap \definedInputs_M(s_2)~|$ \Statex [5] $\quad \ \forall y \in out(s_1,x) \cap out(s_2,x).$
                \Statex [6] $\quad \ \exists W. (\{s_1\text{-after-}x/y,s_2\text{-after-}x/y\},W) \in R\}$
                \If{$X \neq \varnothing$}
                    \State choose any $x \in X$
                    \State $W' \gets \{W~|~\exists y \in out(s_1,x) \cap out(s_2,x) :$ 
                    \Statex [7] $\quad \ \ (\{s_1\text{-after-}x/y,s_2\text{-after-}x/y\},W) \in R\}$
                    \State $R \gets R \cup \{ \big(\{s_1,s_2\}, \{x\}.(\bigcup_{W\in W'}W)\big) \}$
                    \State $P \gets P - \{ \{s_1,s_2\} \}$
                    \State $\text{\it changed} \gets \text{\it True}$ \Comment{$P$ has changed}
                \EndIf
            \EndFor
        \EndWhile
        \State \Return $R$
        \EndFunction
    \end{algorithmic}
    \caption{An algorithm to compute r-distinguishing sets for all pairs of r-distinguishable states of an FSM}\label{alg:CollectRDSets}
\end{figure}

The following lemma justifies the use of  any $W$ that r-distinguishes states of $M$ reached by a pair of IO sequences to distinguish  the states of $I$ reached by the same sequences, if no failure is uncovered by applying  $W$:

\begin{lemma}\label{lem:rdist_dist}
Let $k \in \mathbb{N}$ and suppose that $W\subseteq \Sigma_I^*$ does r(k)-distinguish $\ii s\after \bar{x}_1/\bar{y}_1$ and $\ii s\after \bar{x}_2/\bar{y}_2$ for some $\bar{x}_1/\bar{y}_1, \bar{x}_2/\bar{y}_2 \in \lang(M) \cap \lang(I)$. Then, if $I$ passes $\{\bar{x}_1,  \bar{x}_2\}.W$, traces $\bar{x}_1/\bar{y}_1$ and $\bar{x}_2/\bar{y}_2$ reach distinct states in $I$.
\end{lemma}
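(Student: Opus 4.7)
The plan is to proceed by induction on $k$, using at each step the pass condition together with observability of $I$.

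For the base case $k = 0$, r(0)-distinguishability gives $\definedInputs_M(\ii s\after \bar{x}_1/\bar{y}_1) \neq \definedInputs_M(\ii s\after \bar{x}_2/\bar{y}_2)$. Since $I$ passes $\bar{x}_1$ and $\bar{x}_2$ (both are prefixes of sequences in $\{\bar{x}_1,\bar{x}_2\}.W$), the pass condition gives $\definedInputs_I(\bar{x}_j/\bar{y}_j) = \definedInputs_M(\bar{x}_j/\bar{y}_j)$ for $j\in\{1,2\}$. Hence the states reached in $I$ differ in their defined-input sets, and so must be distinct.

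For the inductive step, assume the lemma for $k$ and let $W$ r(k+1)-distinguish $s_1 := \ii s\after \bar{x}_1/\bar{y}_1$ and $s_2 := \ii s\after \bar{x}_2/\bar{y}_2$. If $W$ already r(k)-distinguishes the pair, the claim follows from the induction hypothesis. Otherwise, unfolding Definition~\ref{def:rdisset} yields some $x \in \definedInputs_M(s_1) \cap \definedInputs_M(s_2) \cap \Pref(W)$ such that for every common output $y \in out(s_1,x)\cap out(s_2,x)$ there exists $W_y$ with $\{x\}.W_y \subseteq \Pref(W)$ that r(k)-distinguishes $s_1\after x/y$ and $s_2\after x/y$. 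I would argue by contradiction: assume $\ii u \after \bar{x}_1/\bar{y}_1 = \ii u \after \bar{x}_2/\bar{y}_2 =: t$ in $I$. Using the pass condition on $\bar{x}_j$, $x$ is enabled in $t$, so pick any $y' \in out_I(t,x)$; passing $\bar{x}_j.x$ forces $\bar{x}_j/\bar{y}_j.x/y' \in \lang(M)$, hence $y' \in out(s_1,x)\cap out(s_2,x)$. Then $W_{y'}$ exists, and since $\{\bar{x}_j.x\}.W_{y'}$ is contained in $\{\bar{x}_j\}.\Pref(W)$, closure of the pass relation under prefixes yields that $I$ passes $\{\bar{x}_1.x,\bar{x}_2.x\}.W_{y'}$. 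The induction hypothesis applied to the extended traces forces the resulting states in $I$ to be distinct; but observability of $I$ together with $\ii u\after \bar{x}_1/\bar{y}_1 = \ii u\after \bar{x}_2/\bar{y}_2$ makes both equal to $t\after x/y'$, contradiction.

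The routine steps are closure of the pass relation under prefixes and the bookkeeping needed to show that $\{x\}.W_y \subseteq \Pref(W)$ lifts to $\{\bar{x}_j.x\}.W_y$ being covered by $\{\bar{x}_1,\bar{x}_2\}.W$ via prefixes. The main technical hurdle I anticipate is the inductive step's interplay between the grey-box pass condition (which certifies that $x$ is enabled in $t$ and that any $I$-response $y'$ is a common $M$-output of $s_1$ and $s_2$) and observability of $I$ (used to identify the states reached by both extended traces). Getting these two ingredients to mesh so that the inductive contradiction goes through — in particular handling the potentially vacuous case where $\varnothing = out(s_1,x)\cap out(s_2,x)$, which makes the r(k+1)-condition collapse to a disjoint-outputs check that the pass condition already rules out — is the one spot deserving real care.
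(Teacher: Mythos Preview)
Your proposal is correct and follows essentially the same approach as the paper: induction on $k$, with the base case handled via the grey-box pass check on defined inputs and the inductive step via the witness input $x$ from Definition~\ref{def:rdisset} combined with observability of $I$. One minor difference worth noting: in the inductive step the paper picks an arbitrary $y\in out(s_1,x)\cap out(s_2,x)$ and applies the hypothesis directly, whereas your proof-by-contradiction first assumes $t_1=t_2=t$ and selects $y'\in out_I(t,x)$; your choice guarantees $\bar{x}_j.x/\bar{y}_j.y'\in\lang(I)$, which the induction hypothesis formally requires, so your packaging is in fact slightly tighter on that point (and it also absorbs the disjoint-outputs case without a separate branch).
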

\begin{proof}
    Let $s_i = \ii s\after \bar{x}_i/\bar{y}_i$ and $t_i = \ii t\after \bar{x}_i/\bar{y}_i$ for $i \in \{1,2\}$.   
    We prove the desired result by induction on $k$. First assume that $k=0$. Then, as $I$ passes $\{\bar{x}_1, \bar{x}_2\}.W$ and thus in particular $\{\bar{x}_1, \bar{x}_2\}$, $\definedInputs_I(t_1) = \definedInputs_M(s_1) \not= \definedInputs_M(s_2) = \definedInputs_I(t_2)$ and hence $t_1 \not= t_2$.
    
As induction hypothesis, assume that the lemma holds for all $k = 0,\dots,k'$ with $k'\ge 0$.
    
    For the induction step let $k = k' + 1$ and assume that $s_1$ and $s_2$ are not r(0)-distinguishable, as this case would be identical to the base case. This implies that $\definedInputs_M(s_1) = \definedInputs_M(s_2) \not= \varnothing$, as $s_1$ and $s_2$ are r-distinguishable. Then, as $W$ r(k)-distinguishes $s_1$ and $s_2$, there must exist some $x \in \definedInputs_M(s_1) \cap \definedInputs_M(s_2) \cap \Pref(W)$ such that for all $y \in out(s_1,x) \cap out(s_2,x)$ there exists some $W'$ such that $\{x\}.W' \subseteq \Pref(W)$ holds and $\{x\}.W'$ does r(k')-distinguish $s_1\after x/y$ and $s_2\after x/y$. 
    If $out(s_1,x) \cap out(s_2,x)$ is empty, then $t_1 \not= t_2$ follows from $W$ containing some sequence $x.\bar{x}'$ and $I$ passing $\{\bar{x}_1,\bar{x}_2\}.\{x.\bar{x}'\}$, which requires $out(t_i,x) \subseteq out(s_i,x)$ to hold for $i \in \{1,2\}$.
    Thus let $y$ be an arbitrary element of $out(s_1,x) \cap out(s_2,x)$ and let $s_i' = s_i\after x/y = \ii s\after \bar{x}_i.x/\bar{y}_i.y$ for $i \in \{1,2\}$. By the properties of $W$ and $x$ there must exist some $W'$ such that $\{x\}.W' \subseteq \Pref(W)$ holds and $W'$ r(k')-distinguishes $s_1'$ and $s_2'$. Then, by the induction hypothesis, $\ii t\after \bar{x}_1.x/\bar{y}_1.y$ and $\ii t\after \bar{x}_2.x/\bar{y}_2.y$ reach distinct states in $I$, which implies the desired inequality $t_1 \not= t_2$ due to $I$ being observable.
\end{proof}

In the following we will use $S_D \subseteq \mathbb{P}(S)$ to denote the set of maximal sets of pairwise r-distinguishable states of $M$. For every $s \in S$ that is not r-distinguishable from any other state in $S$, $S_D$ includes a singleton set $\{s\}$. Therefore, every state in $S$ is contained in some element of $S_D$.

Note here that calculating this set is identical to finding all maximal cliques in the undirected graph whose vertices are the state of $M$ and where two states are adjacent if and only if they are r-distinguishable. This constitutes a computationally expensive problem, as described, for example,  in~\cite{TOMITA200628}. Should this computation be unfeasible for some large FSM, then it is also sufficient to instead only use a subset of $S_D$, as long as each state of the FSM is contained in some element of this subset, but such a reduction might delay the termination of algorithms described later.

Using Table~\ref{table:acc} it is easy to see that in the CR state machine many pairs of states are r(0)-distinguishable due to differing defined inputs. That is, the states of the CR state machine can be partitioned into four sets based on their disabled inputs:
\begin{align*}
    G_1 &:= \{\texttt{init, card0, card1}\} \\
    G_2 &:= \{\texttt{auth0, auth1}\} \\
    G_3 &:= \{\texttt{ejected0, ejected1}\} \\
    G_4 &:= \{\texttt{PIN0, PIN1, PIN2}\}
\end{align*}
such that for all $1 \leq i < j \leq 4$ it holds that each pair of states $s_i \in G_i, s_j \in G_j$ is r(0)-distinguishable and hence r-distinguished by any set of input sequences. Furthermore, state \texttt{init} can be r(1)-distinguished from any other state by application of \texttt{pr.a} or \texttt{pr.A}, as it is the only state not ignoring these inputs. Neither the two \texttt{card}-states nor the two \texttt{auth}-states are r-distinguishable, as they differ in behaviour only by the additional transition from \texttt{auth0} to \texttt{ejected1} on \texttt{ts.in.ok}. This  cannot be used to r-distinguish \texttt{auth0} and \texttt{auth1}, as both of these states also reach \texttt{PIN0} on input \texttt{ts.in.ok} with the same output {\tt ts.out.p}. Next, the \texttt{PIN}-states can be r-distinguished by one or two applications of \texttt{ts.in.ip}. Finally, the \texttt{ejected}-states can be r(1)-distinguished via \texttt{ci.r}.
Thus, there exist four maximal sets of pairwise r-distinguishable states of the CR state machine:
\begin{align*}
    S_{00} &:= \{\texttt{card0, auth0}\} \cup \{\texttt{init}\} \cup G_3 \cup G_4 \\
    S_{01} &:= \{\texttt{card0, auth1}\} \cup \{\texttt{init}\} \cup G_3 \cup G_4 \\
    S_{10} &:= \{\texttt{card1, auth0}\} \cup \{\texttt{init}\} \cup G_3 \cup G_4 \\
    S_{11} &:= \{\texttt{card1, auth1}\} \cup \{\texttt{init}\} \cup G_3 \cup G_4
\end{align*}

\subsection{Test Suite Generation}\label{sec:gen}

The test suite generation algorithm described in this section is the second significant change in comparison to the classical state counting method. By taking the information $\Delta_M(s)$ about accepted events in states $s$ into account, we can save a substantial number of test cases. 

Throughout this section, we assume that the reference model $M$ is represented in such a form that 
$$
\forall y\in \Sigma_O, s_2\in S.\ (s_1,x,y,s_2)\not\in h_M
$$
implies that input $x$ is disabled in state $s_1$. Ignored events are supposed to be always present in $h$ with self-loop transitions and associated null-output. For our example from Section~\ref{sec:crexample}, this means that self-loop transitions labelled by ${\tt pr.a/null}$ and ${\tt pr.A/null}$
are added in Fig.~\ref{fig:crbehavior} to all states but {\tt init}. All other unhandled inputs in this diagram   indicate disabled inputs.

The test strategy described in \cite{hierons_testing_2004} creates a test suite by iterative extension of sequences after each $\bar{v}_s \in V$, using a termination criterion based on state counting. For each $\bar{v}_s$, this extension process  starts with $\{\epsilon\}$. In each iteration, the process extends a previously considered sequence $\bar{x}$ only if there exists some $\bar{y}$ such that $\bar{x}/\bar{y} \in \lang_M(s)$ and for all $S' \in S_D$ it holds that the nonempty prefixes of $\bar{x}/\bar{y}$ applied to $s$ reach states of $S'$ at most $m-|\widehat{S'}|$ times  (recall that $\widehat{S'}$ denotes the subset of d-reachable states from state set $S'$).
For any $s \in \widehat{S}$, $\bar{x}/\bar{y} \in \lang_M(s)$ and $S' \in S_D$, we say that $S'$ \emph{terminates} $\bar{x}/\bar{y}$ for $s$ and $m$, if  the nonempty prefixes of $\bar{x}/\bar{y}$ applied to $s$ reach    states of $S'$ exactly $m-|\widehat{S'}|+1$ times and no proper prefix of $\bar{x}/\bar{y}$ is terminated for $s$ and $m$ by any element of  $S_D$. We denote the set of all $S' \in S_D$ that terminate $\bar{x}/\bar{y}$ for $s$ and $m$ by $term(s,\bar{x}/\bar{y},m)$. The result of the iterative extension process for some $s \in \widehat{S}$ can then be defined as 
\begin{align*}
&Tr(s,m) := \Pref \{ \bar{x}~|~\exists \bar{y} : \ \bar{x}/\bar{y} \in \lang_M(s) \wedge \ term(s,\bar{x}/\bar{y},m) \not= \varnothing \}
\end{align*}

Note here that this iterative extension process always terminates, as each state $s \in S$ is contained in at least one set   $S'\in S_D$. Thus, each extension $\bar{x}.x/\bar{y}.y \in \lang_M(s)$  of some trace $\bar{x}/\bar{y}$ visits at least one element of $S_D$ an additional time compared to $\bar{x}/\bar{y}$. Therefore, as $S_D$ is finite, no trace can be extended infinitely without being terminated.

Continuing example $M_{ex}$ from Fig.~\ref{fig:syntheticExample}, Fig.~\ref{fig:syntheticExampleTr} shows the extension process of calculating $Tr(s_0,4)$ such that the maximal paths in the tree constitute the set of all $\bar{x}/\bar{y} \in \lang_M(s_0)$ that are terminated by at least one of the two maximal sets of pairwise distinguishable states of $M_{ex}$, namely $\{s_0,s_2,s_3\}$ and $\{s_1,s_2,s_3\}$. The resulting input projection of these paths then constitutes $Tr(s_0,4)$, which in this case is the set of all input sequences of length 3 or 4 over alphabet $\{a,b\}$.
\begin{figure}[ht]
    \begin{center}
        \scalebox{0.8}{
            \begin{tikzpicture}[->,>=stealth',auto,node distance=0cm,thick]    
    \node[state,initial by arrow, initial text={}]         (root) [] {$s_0$};
    \node[state]         (l) [below left=1.2cm and 1.25cm of root ] {$s_1$};

    \node[state]         (r) [below right=1.2cm and 1.25cm of root ] {$s_2$};
    \node[state]         (rl) [below left=1.2cm and 0.5cm of r ] {$s_2$};

    \node[state]         (ll) [left=1.75cm of rl ] {$s_2$};
    \node[state]         (lll) [below left =1.2cm and 0.5cm of ll ] {$s_2$};
    \node[state]         (llll) [below left =1.2cm and 0.00cm of lll ] {$s_2$};
    \node[state]         (lllr) [below right =1.2cm and 0.00cm of lll ] {$s_3$};
    \node[state]         (llr) [below right =1.2cm and 0.5cm of ll ] {$s_3$};
    \node[state]         (llrl) [right=0.74cm of lllr ] {$s_2$};

    \node[state]         (rr) [below right =1.2cm and 0.5cm of r ] {$s_3$};
    \node[state]         (rll) [right=0.64cm of llr ] {$s_2$};
    \node[state]         (rrl) [right =1.35cm of rll ] {$s_2$};

    \path 
    (root)  edge    [align=center, swap]  node {$a/0$ \\ $b/3$} (l)
    (root)  edge    [align=center]  node {$a/1$ \\ $b/1$ \\ $b/2$} (r)
    (l)  edge    [align=center, swap]  node {$a/1$ \\ $b/1$} (ll)
    (ll)  edge    [align=center, swap]  node[xshift=-0.1cm, yshift=-0.2cm] {$a/0$ \\ $b/2$} (lll)
    (ll)  edge    [align=center]  node {$a/1$} (llr)
    (lll)  edge    [align=center, swap]  node[xshift=-0.1cm, yshift=-0.2cm] {$a/0$ \\ $b/2$} (llll)
    (lll)  edge    [align=center]  node {$a/1$} (lllr)
    (llr)  edge    [align=center, swap]  node {$a/0$} (llrl)
    (r)  edge    [align=center, swap]  node[xshift=-0.1cm, yshift=-0.2cm] {$a/0$ \\ $b/2$} (rl)
    (r)  edge    [align=center]  node {$a/1$} (rr)
    (rl)  edge    [align=center, swap]  node {$a/0$ \\ $b/2$} (rll)
    (rr)  edge    [align=center, swap]  node {$a/0$} (rrl)
    ;

    \draw [
        -,
        thick,
        decoration={
            brace,
            mirror,
            raise=0.6cm
        },
        decorate
    ] (llll.west) -- (llrl.east) 
    node [pos=0.5,anchor=north,yshift=-0.7cm] {$\{\{s_0,s_2,s_3\}, \{s_1,s_2,s_3\}\}$}; 

    \draw [
        -,
        thick,
        decoration={
            brace,
            mirror,
            raise=0.6cm
        },
        decorate
    ] (rll.west) -- (rrl.east) 
    node [pos=0.5,anchor=north,yshift=-0.7cm] {$\{\{s_0,s_2,s_3\}\}$}; 
\end{tikzpicture}
        }        
    \end{center}
    \caption{Graphical representation of the extension process for $Tr(s_0,4)$ in $M_{ex}$ as a tree which merges sequences visiting the same sequence of states in $M_{ex}$. Nodes indicate reached states, while the sets given in brackets below the leaves indicate $term(s_0,\bar{x}/\bar{y})$ for all $\bar{x}/\bar{y}$ reaching these leaves.}
    \label{fig:syntheticExampleTr}
\end{figure}
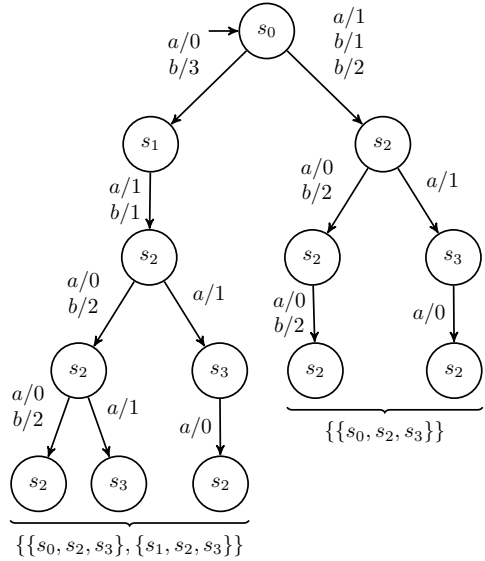

Finally, the test suite is constructed by applying for each $s \in \widehat{S}$ all sequences in $Tr(s)$ after $\bar{v}_s$ and by applying r-distinguishing sets based on the termination criterion. That is, for any $\bar{x}/\bar{y}$ terminated for $s$ and $m$, some $S_i$ is selected that terminates it and then after each pair of distinct sequences $\bar{x}_1/\bar{y}_1, \bar{x}_2/\bar{y}_2 \in V' \cup (\{\bar{v}_s\}.\Pref(\bar{x}/\bar{y}))$ that reach distinct states $s_1, s_2$ in $S_i$, some set $W$ is selected that r-distinguishes those states, and the sequences $\{\bar{x}_1, \bar{x}_2\}.W$ are added to the test suite. Finally, all proper prefixes of the test suite are removed. 

This test strategy is implemented in function $\textsc{GenerateTestSuite}(M,m)$ detailed in Fig.~\ref{alg:GenerateTestSuite}, which computes a test suite $T \subset \Sigma_I^*$ for specification $M$ and upper bound $m$ of the size of FSMs in the fault domain $\faultDomain$. A result of applying the strategy to $M_{ex}$ for $m=4$ is given in Appendix~\ref{appendix:mex_test_suite}.

The following two lemmata establish the soundness and exhaustiveness of any test suite generated by this strategy.

\begin{figure}
    \begin{algorithmic}[1]
        \Function{GenerateTestSuite}{$M,m$}  : $\mathbb{P}(\Sigma_I^*)$
        \State{choose a state cover $V$ of $M$}
        \State{$V' \gets \{ \bar{x}/\bar{y} \in \lang(M)~|~\bar{x} \in V \}$}
        \State{calculate $Tr(s,m)$ for each $s\in \widehat{S}$}
        \State{$T\gets \bigcup_{s \in \widehat{S}} \  \{\bar{v}_s\}.Tr(s,m)$}
        \State $D\gets \{ (s,\bar{x}/\bar{y})~|~s \in \widehat{S}, \bar{x}/\bar{y} \in \lang_M(s), \ term(s,\bar{x}/\bar{y},m) \not= \varnothing  \}$ 
        \ForAll{$(s,\bar{x}/\bar{y}) \in D$}
            \State{choose an $S_i \in term(s,\bar{x}/\bar{y},m)$}
            \ForAll{$\bar{x}_1/\bar{y}_1, \bar{x}_2/\bar{y}_2 \in V' \cup (\{\bar{v}_s\}.\Pref(\bar{x}/\bar{y}))$}
                \State{$s_1 \gets \ii s\after \bar{x}_1/\bar{y}_1$}
                \State{$s_2 \gets \ii s\after \bar{x}_2/\bar{y}_2$}
                \If{$s_1 \not= s_2 \wedge s_1 \in S_i \wedge s_2 \in S_i$}
                    \State{$W' \gets \{ \bar{x}'~|~\{\bar{x}_1.\bar{x}',\bar{x}_2.\bar{x}'\} \subseteq \Pref(T) \}$}
                    \If{$W'$ does not r-distinguish $s_1, s_2$}
                        \State{choose $W$ that r-distinguishes $s_1, s_2$}
                        \State{$T \gets T \cup \{\bar{x}_1, \bar{x}_2\}.W$}
                    \EndIf
                \EndIf
            \EndFor
        \EndFor
        \State{$T \gets \{ \bar{x} \in T~|~\nexists \bar{x}' : \bar{x}' \not= \epsilon \wedge \bar{x}.\bar{x}' \in T  \} $}
        \State \Return $T$
        \EndFunction
    \end{algorithmic}
    \caption{Algorithm generating $m$-complete $\strongred$-conformance test suites.}\label{alg:GenerateTestSuite}
\end{figure}

\begin{lemma}\label{lem:sound}
    Let $T$ be a test suite generated by $\textsc{GenerateTestSuite}(M,m)$. Then $T$ is \textit{sound}: For any $I \in \faultDomain$, if $I \strongred M$ holds, then $I$ passes $T$.
\end{lemma}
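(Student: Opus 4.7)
The plan is to observe that soundness here is essentially immediate from the definitions: the two conditions that a test $\bar x$ requires of a passing implementation coincide with the two clauses of strong reduction, applied to an arbitrary prefix. So the structure of $T$ built by \textsc{GenerateTestSuite} never enters the argument --- indeed we will show the stronger statement that $I \strongred M$ implies $I$ passes \emph{every} input sequence, hence in particular every $\bar x \in T$.

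Concretely, I would fix an arbitrary $\bar x \in T$, an arbitrary prefix $\bar x_1 \in \Pref(\bar x)$, and an arbitrary $\bar x_1/\bar y_1 \in \lang(I)$, and check the two obligations in the definition of passing. First, from clause~(\ref{eq:isred}) of strong reduction we have $\lang(I) \subseteq \lang(M)$, so $\bar x_1/\bar y_1 \in \lang(M)$ as required. Second, from the second clause of strong reduction applied to $\alpha := \bar x_1/\bar y_1 \in \lang(I)$, we obtain $\definedInputs_I(\bar x_1/\bar y_1) = \definedInputs_M(\bar x_1/\bar y_1)$, which is the remaining obligation. Since $\bar x_1 \in \Pref(\bar x)$ and $\bar x_1/\bar y_1 \in \lang(I)$ were arbitrary, $I$ passes $\bar x$; since $\bar x \in T$ was arbitrary, $I$ passes $T$.

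The only minor subtlety to flag is the case where no $\bar x_1/\bar y_1 \in \lang(I)$ exists for a given prefix $\bar x_1$, but then the universally quantified pass condition is vacuously satisfied and nothing needs to be verified. There is no real obstacle in this proof; the lemma should be viewed as a sanity check that the pass relation is correctly matched with the conformance relation, and the whole burden of correctness of the algorithm is carried by the exhaustiveness lemma that will follow.
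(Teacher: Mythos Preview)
Your proof is correct and takes essentially the same approach as the paper: both recognise that the two obligations in the pass relation are precisely the two clauses of strong reduction instantiated at a prefix, so soundness follows independently of how $T$ was constructed. The paper phrases this as a proof by contradiction, locating a first point of failure along a test case and deriving a violation of one of the two clauses, whereas you give the direct verification; the content is identical.
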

\begin{proof}
Let $\bar{x}$ be an input sequence in $T$ and assume that $I \strongred M$ holds but $I$ fails $\bar{x}$ and thus $T$.
Suppose that $\bar{x}$ is the empty input sequence. Then the test case can only fail because $\Delta_I(\ii t) \neq \Delta_M(\ii s)$, and this contradicts the assumption that $I \strongred M$.

If $\bar{x}$ has positive length, then there must exist some prefix $\bar{x}'.a'$ of $\bar{x}$ and some $\bar{y}'.z'\in\Sigma_O^*$, such that
$\bar{x}'.a'/\bar{y}'.z'\in L(I)$ and
$\bar{x}'/\bar{y}'\in L(I)\cap L(M)$ and $\Delta_I(\ii t\after \bar{x}'/\bar{y}') = \Delta_M(\ii s\after \bar{x}'/\bar{y}')$ (so the test has not yet failed), but either (1) $\bar{x}'.a'/\bar{y}'.z'\not\in L(M)$ or (2) $\Delta_I(\ii t\after\bar{x}'.a'/\bar{y}'.z') \neq \Delta_M(\ii s\after\bar{x}'.a'/\bar{y}'.z')$. Case (1) contradicts the fact that $I$ is a reduction of $M$, and Case (2) contradicts the fact that, as a {\it strong} reduction, $I$ always needs to accept exactly the same inputs as $M$. This completes the proof.
\end{proof}

\begin{lemma}\label{lem:exhaustive}
    Let $T$ be a test suite generated by $\textsc{GenerateTestSuite}(M,m)$. Then $T$ is \textit{exhaustive}: For any $I \in \faultDomain$, if $I$ passes $T$, then $I \strongred M$ holds.
\end{lemma}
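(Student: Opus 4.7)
The plan is a proof by contradiction: assume $I\in\faultDomain$ passes $T$ yet $I\not\strongred M$, and pick a canonically shortest witness, namely a pair $(s,\bar{x}_w/\bar{y}_w)$ with $s\in\widehat{S}$ such that $\bar{v}_s.\bar{x}_w$ drives $I$ along a trace $\bar{v}_s.\bar{x}_w/\bar{w}_s.\bar{y}_w\in L(I)$ violating the pass relation, and $|\bar{x}_w|$ is minimal over all such pairs. By minimality, every strict prefix of this trace lies in $L(I)\cap L(M)$ with matching defined-input sets. It suffices to establish $\bar{x}_w\in Tr(s,m)$, since then $\bar{v}_s.\bar{x}_w\in\{\bar{v}_s\}.Tr(s,m)\subseteq\Pref(T)$ and the pass relation definition forces $I$ to fail $T$.

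The core argument combines state counting with Lemma~\ref{lem:rdist_dist}. Fix any $M$-response $\bar{z}$ of $\bar{x}_w$ from $s$ and assume towards a subsidiary contradiction that some proper prefix of $\bar{x}_w/\bar{z}$ triggers termination via a set $S_i\in S_D$. Then the union $V'\cup\{\bar{v}_s\}.\Pref(\bar{x}_w/\bar{z})$ contains at least $|\widehat{S_i}|+(m-|\widehat{S_i}|+1)=m+1$ distinct IO-prefixes reaching states of $S_i$ in $M$. Since $|I|\le m$, pigeonhole supplies two prefixes $\pi_1\neq\pi_2$ from this union that reach a common state in $I$. The main extension loop of \textsc{GenerateTestSuite}, applied to a terminated extension in $D$ whose prefix set contains both $\pi_1$ and $\pi_2$, injects via its inner loop an r-distinguishing suffix for every pair of prefixes reaching distinct $M$-states in $S_i$. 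By Lemma~\ref{lem:rdist_dist}, $\pi_1$ and $\pi_2$ must therefore also reach a common state in $M$. A splicing argument, relying on observability of $I$ to preserve the suffix outputs, then produces a strictly shorter witness pair, contradicting canonical minimality. Hence no proper prefix of $\bar{x}_w/\bar{z}$ is terminated, so $\bar{x}_w\in Tr(s,m)$, completing the proof.

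The main obstacle is the case analysis inside the splicing step. The intra-extension subcase, where $\pi_1=\bar{v}_s.\bar{x}_w[0{:}i_1]$ and $\pi_2=\bar{v}_s.\bar{x}_w[0{:}i_2]$ for $i_1<i_2$, yields the shorter extension $\bar{x}_w[0{:}i_1].\bar{x}_w[i_2{:}|\bar{x}_w|]$ from $s$ as a strictly shorter witness from the same starting point. The intra-$V'$ subcase cannot arise, because distinct state-cover sequences reach distinct $M$-states in $S_i$ and are forbidden to share an $I$-state by the r-distinguishing tests the algorithm adds for every pair in $V'$. The delicate cross subcase, where $\bar{v}_{s''}\in V'$ collides with some $\bar{v}_s.\bar{x}_w[0{:}i]$ for $i>0$, is handled by forming the trace $\bar{v}_{s''}.\bar{x}_w[i{:}|\bar{x}_w|]$, which is a valid $I$-trace by observability and constitutes a witness from the pair $(s'',\bar{x}_w[i{:}|\bar{x}_w|])$ of extension length $|\bar{x}_w|-i<|\bar{x}_w|$, again contradicting canonical minimality; the degenerate subcase $i=0$ collapses to the intra-$V'$ case and is therefore impossible.
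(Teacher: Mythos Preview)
Your overall architecture matches the paper's proof: contradiction via a minimal witness, state counting to force a repeated $I$-state among $m+1$ traces hitting $S_i$, Lemma~\ref{lem:rdist_dist} to conclude the colliding traces also agree in $M$, and a three-way splicing case analysis to manufacture a shorter witness. However, there is a genuine gap in your execution.

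The problem is the phrase ``fix any $M$-response $\bar{z}$ of $\bar{x}_w$ from $s$''. Your pigeonhole step asserts that the prefixes in $\{\bar{v}_s\}.\Pref(\bar{x}_w/\bar{z})$ reach states \emph{in $I$}, and your splicing step transplants a suffix onto one of them while preserving $I$-outputs via observability. Both steps require those prefixes to lie in $L(I)$. For an arbitrary $M$-response $\bar{z}$ this fails: nondeterminism in $M$ may give $\bar{z}[0{:}j]\neq\bar{y}_w[0{:}j]$, and then $\bar{v}_s.\bar{x}_w[0{:}j]/\bar{w}_s.\bar{z}[0{:}j]$ need not be an $I$-trace at all. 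The paper avoids this by working directly with the $I$-output $\bar{y}$: minimality of the witness guarantees every \emph{proper} prefix of $\bar{x}/\bar{y}$ already lies in $L_M(s)\cap L_I(\ldots)$, so the terminated proper prefix $(s,\bar{x}'/\bar{y}')$ is simultaneously an element of $D$ and a sequence whose prefixes are all $I$-traces. Your argument is repaired by replacing $\bar{z}$ with $\bar{y}_w$ on all proper prefixes (and, for the final symbol, any $M$-output extending it---which exists because the defined-input sets match at the penultimate step).

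Two smaller points. First, you should let $S_i$ be the set the algorithm actually chooses in line~8 for the element $(s,\bar{x}'/\bar{y}')\in D$, not an arbitrary terminating set; otherwise the r-distinguishing extensions you rely on may never have been added to $T$. Second, in your intra-$V'$ subcase you quantify over all of $V'$, but $V'$ can contain several responses to the same $\bar{v}_{s'}$, all reaching $s'$; the paper instead selects one witness $\pi_i\in V'\cap L(I)$ per $s_i\in\widehat{S_i}$, which both keeps the count at exactly $|\widehat{S_i}|$ and makes the intra-$V'$ collision contradict minimality of $V$ cleanly.
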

\begin{proof}
    Assume that $I$ passes $T$ though $I \strongred M$ does {\it not} hold. Consider first the special case
that     $I \not\strongred M$ because $\Delta_I(\ii t) \neq \Delta_M(\ii s)$. This means that the first violation of the strong reduction relation already occurs in the initial state $\ii t$ of the implementation, without providing any input. Since the grey-box testing assumption provides $\Delta_I(\ii t)$ and the test oracle operates by comparing the SUT behaviour to the model $M$ in back-to-back fashion (see Section~\ref{sec:oracles}), this error will be immediately revealed, regardless of the test suite applied. Therefore, we assume for the remainder of this proof that $\Delta_I(\ii t) = \Delta_M(\ii s)$, so that detection of the conformance violation requires an input sequence of minimal length~1.

    Let $V, V'$ be defined as in lines 2 and 3 of the algorithm.
    Then, as $V$ contains $\epsilon$, there must exist a minimal length sequence $\bar{x}/\bar{y}$ such that some $s \in \widehat{S}$ and $\bar{v}_s/\bar{v}_s' \in V'$ exist such that $\definedInputs_I(\bar{v}_s.\bar{x}/\bar{v}_s'.\bar{y}) \not= \definedInputs_M(\bar{v}_s.\bar{x}/\bar{v}_s'.\bar{y})$ or $\bar{v}_s.\bar{x}/\bar{v}_s'.\bar{y} \in \lang(I) \setminus \lang(M)$ holds. 
    Therefore, input sequence $\bar{v}_s.\bar{x}$ cannot be contained in $T$, as $I$ passes $T$, and hence there must exist a proper prefix $\bar{x}'/\bar{y}'$ of $\bar{x}/\bar{y}$ such that $(s,\bar{x}'/\bar{y}') \in D$, where $D$ is the set of all $(s,\bar{x}/\bar{y})$ such that $\bar{x}/\bar{y}$ is terminated for $s \in \widehat S$ and $m$, as assigned in line~6 of the algorithm. The for-loop in line~7 will then run through one cycle where
the pair     $(s,\bar{x}'/\bar{y}')$ is processed.
    
    In this cycle, let $S_i \in term(s,\bar{x}'/\bar{y}',m)$ denote the set chosen in line~8 of the algorithm. 
Let 
$
P = \{ \bar{x}''/\bar{y}'' \in \Pref(\bar{x}'/\bar{y}') \setminus \{\epsilon\}~|~s\after \bar{x}''/\bar{y}'' \in S_i\}
$
denote the set of all nonempty prefixes of $\bar{x}'/\bar{y}'$ that reach states of $S_i$ if applied to $s$. By construction, $\{\bar{v}_s/\bar{v}_s'\}.P$ must reach states of $S_i$ exactly $m-|\widehat{S_i}|+1$ times and hence $|P| = m-|\widehat{S_i}|+1$. Let $P = \{\tau_1,\ldots,\tau_{m-|\widehat{S_i}|+1}\}$, where $\tau_i$ is a proper prefix of $\tau_j$ for all $1 \leq i < j \leq m-|\widehat{S_i}|+1$.
    
    Next, note that $I$ must exhibit some behaviour $\bar{v}/\bar{v}' \in V' \cap \lang(I)$ for any $\bar{v} \in V$ in order to pass $V \subseteq \Pref(T)$. Let $\widehat{S_i} = \{s_1,\ldots,s_k\}$, and for each $s_i \in \widehat{S_i}$ let  $\pi_i \in V' \cap \lang(I)$ denote an arbitrary IO trace $\bar{v}_{s_i}/\bar{v}_{s_i}' \in V' \cap \lang(I)$, and finally let $V^I = \{\pi_i~|~i \in \{1,\ldots,k\}\}$. Set $V^I$ then contains $|\widehat{S_i}|$ sequences 
reaching states     $\{s_1,\ldots,s_k\} = \widehat{S_i}\subseteq S_i$ in $M$.
        
    Thus, the sequences in the disjoint union of $V^I$ and $\{\bar{v}_s/\bar{v}_s'\}.P$ reach states in $S_i$ exactly $m+1$ times. 
Therefore,   as $|I| \leq m$, there must exist some state of $I$ that is reached by two  IO-traces 
$\alpha, \beta\in \big(V^I \cup \{\bar{v}_s/\bar{v}_s'\}.P\big)$ that are either distinct or contained in both operands of the union. Regarding the containment
of $\alpha$ and $\beta$ in $V^I$ or $\{\bar{v}_s/\bar{v}_s'\}.P$, there exist three possible cases:
    \begin{align*}
    &\text{Both $\alpha$ and $\beta$ in $V^I$, that is,}  & (a)\\
    &    \qquad \exists i < j \in \{1,\ldots,|\widehat{S_i}|\}: t_0\after \pi_i = t_0\after \pi_j \\
    &     \text{Different sets: $\alpha \in V^I$ and $\beta \in \{\bar{v}_s/\bar{v}_s'\}.P$, that is,}  & (b)\\
    &    \qquad \exists i \in \{1,\ldots,|\widehat{S_i}|\}, j \in \{1,\ldots,m-|\widehat{S_i}|+1\}: \ t_0\after \pi_i = t_0\after (\bar{v}_s/\bar{v}_s').\tau_j \\
    &    \text{Both $\alpha$ and $\beta$ in $\{\bar{v}_s/\bar{v}_s'\}.P$, that is,}  & (c)\\
    &    \qquad \exists i < j \in \{1,\ldots,m-|\widehat{S_i}|+1\}: \   t_0\after (\bar{v}_s/\bar{v}_s').\tau_i = t_0\after (\bar{v}_s/\bar{v}_s').\tau_j 
    \end{align*}
    
    Consider next the properties of $\alpha$ and $\beta$ in $M$. Let $s_\alpha = \ii s\after \alpha$ and $s_\beta = \ii s\after \beta$. Suppose that $s_\alpha \neq s_\beta$, then these states are r-distinguishable, as they are both contained in $S_i$. Thus, after having executed lines 10 to 16 of the algorithm, $T$ contains an r-distinguishing set $W$ for $s_\alpha$ and $s_\beta$ that is applied after $\{\alpha,\beta\}$ and thus, by Lemma~\ref{lem:rdist_dist}, $\alpha$ and $\beta$ must reach distinct states in any $I$ passing $T$, contradicting the assumption that they reach the same state $t$ in $I$. This shows that $\alpha$ and $\beta$ reach the same state in $M$. 
    
    Next we consider cases (a) to (c):
    
    In case (a), $\alpha = \pi_i$ and $\beta = \pi_j$ reaching the same state in $M$ requires $V$ to contain two distinct input sequences reaching the same state in $M$, which contradicts the minimality requirement on state covers.
    
     In case (b), let $\alpha = \pi_i$ and $\beta = (\bar{v}_s/\bar{v}_s').\tau_j$, and note that there exists some $\tau'$ such that $\bar{x}/\bar{y} = \tau_j.\tau'$ and also $|\tau_j| \geq 1$ and hence $|\tau'| < |\bar{x}/\bar{y}|$. 
Then, as $\alpha$ and $\beta$ reach the same states both in $M$ and in $I$, and since a failure can be observed along $\tau'$ applied after $\beta$ and hence also after $\alpha$, for $\tau'$ it holds that $s_i \in \widehat{S_i}$, $\alpha \in V'$, and also
one of the following holds: $\definedInputs_I(\alpha.\tau') \not= \definedInputs_M(\alpha.\tau')$ or  $\alpha.\tau' \in \lang(I) \setminus \lang(M)$, which contradicts the minimality of $\bar{x}/\bar{y}$.
    
    In case (c), similarly to case (b), let $\alpha = (\bar{v}_s/\bar{v}_s').\tau_i$ and $\beta = (\bar{v}_s/\bar{v}_s').\tau_j$ and note that there exists  nonempty $\tau',\tau''$ such that $\tau_j = \tau_i.\tau'$ and $\bar{x}/\bar{y} = \tau_j.\tau'' = \tau_i.\tau'.\tau''$ and thus $|\tau_i.\tau''| < |\bar{x}/\bar{y}|$.
    Then, as $(\bar{v}_s/\bar{v}_s').\tau_i$ and $(\bar{v}_s/\bar{v}_s').\tau_i.\tau'$ reach the same states both in $M$ and in $I$ and as a failure can be observed along $\tau''$ applied after $(\bar{v}_s/\bar{v}_s').\tau_i.\tau'$ and hence also after $(\bar{v}_s/\bar{v}_s').\tau_i$, for $\tau_i.\tau''$ it holds that $s_i \in \widehat{S_i}$, $\bar{v}_s/\bar{v}_s' \in V'$, $(\bar{v}_s/\bar{v}_s').\tau_i.\tau' \in \lang(I) \cap \lang(M)$ and also
    one of the following holds: $\definedInputs_I((\bar{v}_s/\bar{v}_s').\tau_i.\tau'') \not= \definedInputs_M((\bar{v}_s/\bar{v}_s').\tau_i.\tau'')$ or $(\bar{v}_s/\bar{v}_s').\tau_i.\tau'' \in \lang(I) \setminus \lang(M)$, which again contradicts the minimality of $\bar{x}/\bar{y}$.
    
    Thus, in every case a contradiction can be derived. Therefore, the initial assumption 
that $I$ passes $T$ without fulfilling $I \strongred M$  cannot hold, establishing the exhaustiveness of $T$. 
\end{proof}

The $m$-completeness of test suites generated by the strategy described above then follows from Lemma~\ref{lem:sound} and Lemma~\ref{lem:exhaustive}:
\begin{theorem}
    Let $T$ be a test suite generated by $\textsc{GenerateTestSuite}(M,m)$. Then $T$ is $m$-complete: 
 For any $I \in \faultDomain$,    $I \strongred M$ holds if and only if $I$ passes $T$.
\end{theorem}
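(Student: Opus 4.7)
The plan is to derive the theorem as a direct corollary of Lemma~\ref{lem:sound} and Lemma~\ref{lem:exhaustive}, which have already been established for arbitrary test suites $T$ produced by $\textsc{GenerateTestSuite}(M,m)$. Recall from the definition of $m$-completeness in Section~\ref{sec:statecounting} that a test suite $T$ is $m$-complete with respect to $M$ and $\strongred$ precisely when it is both sound and exhaustive with respect to the fault domain $\faultDomain$; that is, every $I \in \faultDomain$ satisfies $I \strongred M$ if and only if $I$ passes $T$.

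First, I would fix an arbitrary $I \in \faultDomain$ and establish the ``only if'' direction by citing Lemma~\ref{lem:sound}: it asserts exactly that if $I \strongred M$, then $I$ passes $T$. Second, for the ``if'' direction, I would invoke Lemma~\ref{lem:exhaustive}, which states that if $I$ passes $T$, then $I \strongred M$. Since both directions of the biconditional hold for every member of the fault domain, $T$ is $m$-complete by definition.

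There is no real obstacle to overcome in this theorem itself; all the technical content — in particular the combinatorial state-counting argument involving $V^I$, the prefixes collected in $P$, and the case analysis (a)--(c) used to derive a contradiction from $|I| \le m$ — has already been discharged in the proof of Lemma~\ref{lem:exhaustive}. Likewise, the comparatively short argument that a genuine strong reduction cannot violate the pass relation has been handled in Lemma~\ref{lem:sound}. Consequently, the proof of the theorem reduces to a two-line invocation of these two lemmas together with an appeal to the definition of $m$-completeness given in Section~\ref{sec:statecounting}.
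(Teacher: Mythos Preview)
Your proposal is correct and matches the paper's own proof, which simply states that $m$-completeness follows from Lemma~\ref{lem:sound} and Lemma~\ref{lem:exhaustive}. There is nothing to add; the theorem is indeed just the conjunction of the two lemmas together with the definition of $m$-completeness.
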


\subsection{Complexity Considerations}\label{sec:complexitymain}

In this section, a few boundary cases of the test generation algorithm specified in Fig.~\ref{alg:GenerateTestSuite} are analysed with respect to the number of test cases to be generated by the algorithm. Throughout the   section, let $a = m-n \ge 0$ denote the maximal number of {\it additional} states that may be contained in the implementation.

\subsubsection{Deterministic, completely specified case.}

For this type of FSMs, testing for strong reduction is equivalent to testing for language equivalence.
Asymptotically, considering large state spaces $n$ or large differences $a = m-n$, the bound calculated
in Appendix~\ref{sec:appdeter}, Formula~\eqref{eq:deterbound}    is simplified to
$$
      B = O(n^2\cdot|\Sigma_I|^{a+1})
$$  
This is the worst case bound for the W-Method already known from~\cite{chow:wmethod,vasilevskii1973}. It should be noted that on average, the algorithm presented here produces significantly fewer test cases than the W-Method for the deterministic, completely specified case, because it executes the H-Method algorithm which is known to produce significantly fewer   cases for complete test suites than the W-Method~\cite{DBLP:conf/forte/DorofeevaEY05}.

\subsubsection{Best-Case Test Suite Size Reduction Effect from Grey-Box Testing.}
As shown in Appendix~\ref{sec:bestcase}, the number of test cases needed for a $\strongred$-complete test suite is
bounded by
$$
n\cdot|\Sigma_I|^{a+1}
$$
if the nondeterministic reference model has d-reachable states only, and each pair of states is r(0)-distinguishable. In this case, states can simply be distinguished by their grey-box information about enabled inputs. Then   the test suite size only depends linearly on the size $n$ of the reference model.
This leads to a significant reduction of test suite size in situations, where many large r-distinguishing sets would be needed in absence of r(0)-distinguishability. For example, if for each state $s_1$ of the reference model there exists for each other state $s_2$ some input sequence r-distinguishing $s_1,s_2$, then the total number of traces in r-distinguishing sets applied to sequences reaching $s$ is already proportional to $n$\footnote{It is shown in~\cite[Section~4.6]{PeleskaHuangLectureNotesMBT}, for example, that the maximal cardinality of minimal characterisation sets in language equivalence testing is $n-1$. This also constitutes an upper bound for the minimal combined size of r-distinguishing sets applied after some sequence, if each pair of states in the reference model is r-distinguishable by a single input sequence.}.
In this case, the test suite size in a black-box setting (no r(0)-distinguishability) would be proportional to~$n^2$. This theoretical complexity result is practically confirmed by the experiments described in Section~\ref{sec:r0impact}.


It should be noted, however, that in the best case for black-box testing, a single input sequence might suffice to r-distinguish all pairs of r-distinguishable states. Then, the black-box test suite size would also grow 
linearly with $n$.

Summarising, we can say that the grey-box approach increases the number of reference models where test suite size depends only linearly on $n$.

The assumed maximal difference $a = m-n$  influences the test suite size again exponentially. 
This exponential dependency is inevitable, as explained in~\cite{DBLP:journals/pcs/El-FakihDYB12}.


\subsubsection{Worst Case}
We now consider the worst case situation leading to the maximal number of test cases necessary to achieve completeness. This occurs when the reference model is nondeterministic, the initial state is the only d-reachable state of the reference model, and no two states are reliably distinguishable.  This means that the reference model $M$ is effectively completely specified:   all states exhibit the same set of defined inputs (otherwise we would have r(0)-distinguishable states). Moreover, it is shown in Appendix~\ref{sec:worstcase}, that up to
$$
|\Sigma_I|^{mn}
$$
test cases are needed to prove  $\strongred$-completeness, which is again ``ordinary'' reduction-completeness because $M$ is completely specified. This result is consistent with the observation that testing all traces of length $mn$ can detect {\it any} 
conformance violation, when applying suitable test oracles. 
The proof of this statement is based on the investigation of 
the product FSM built from reference model $M$ and implementation model $I$; see, for example, \cite[Section~4.5]{PeleskaHuangLectureNotesMBT}.

\subsection{Applicability to Other Conformance Relations}\label{sec:relationrelations}

As described in Section~\ref{sec:ssr}, strong reduction differs from language equivalence, reduction and quasi-reduction in the implementations it considers to be conforming to a given specification. Furthermore, r-distinguishability for strong reduction, as given in Definition~\ref{def:rdis}, is weaker than r-distinguishability for (quasi-)reduction, as defined in~\cite{hierons_testing_2004,DBLP:conf/fates/PetrenkoY05}. The latter does not consider states $s_1,s_2$ to be r-distinguishable by inputs that are defined in only either $s_1$ or $s_2$. Since quasi-reduction allows conforming implementation to behave arbitrarily in response to undefined inputs, r(0)-distinguishability cannot be exploited in testing for this conformance relation. 

Test suites and associated test oracles that are complete for language equivalence, reduction, or quasi-reduction are not necessarily complete for testing strong reduction.  For example, oracles for language equivalence   require the SUT and the specification to exhibit   exactly the same sets of responses to given input sequences. 
Obviously, this would lead to unsound test suites  for strong reduction. 

If the oracle described in Section~\ref{sec:oracles} is used instead, some test suites for language equivalence, reduction or quasi-reduction may also be complete for strong reduction. The class of strategies generating such test suites includes the well-known ``brute force'' strategy based on product FSMs\footnote{This strategy has been described, for example, 
in the lecture notes~\cite[Section~4.5]{PeleskaHuangLectureNotesMBT}.}, which enumerates all input sequences of length $mn$ and thus generates prohibitively large test suites for non-trivial specifications. By the same adaptation of oracles, complete test strategies for quasi-reduction such as~\cite{DBLP:conf/fates/PetrenkoY05} can be employed to test strong reduction, since states that are r-distinguishable for quasi-reduction are also r-distinguishable for strong reduction. However, as will be discussed in Section~\ref{sec:r0impact}, this stronger definition of r-distinguishability can result in fewer pairs of states being r-distinguishable, leading to larger test suites.
These inefficiencies in adapting existing test strategies for strong reduction justify the introduction of function \textsc{GenerateTestSuite} presented in Section~\ref{sec:gen} as a new and complete test suite generation strategy for this conformance relation.

Consider next a test suite $T$ generated by \textsc{GenerateTestSuite} for some specification $M$, with use of the test oracles described in Section~\ref{sec:oracles}. This $T$ is complete for strong reduction, but in general not  complete for language equivalence, reduction, or quasi-reduction. First, $T$ is not exhaustive for language equivalence if $M$ is nondeterministic, as $T$ can be passed by implementations $I \in \faultDomain$ that only exhibit a proper subset of the language of $M$. Next, $T$ is not sound for reduction if the initial state of $M$ has defined inputs, as $\faultDomain$ contains FSMs $I$ without any transitions. Such $I$ are reductions of $M$ but do not pass $T$, as their initial states have fewer defined inputs than that of $M$. Finally, $T$ is not sound for quasi-reduction if $M$ contains states with disabled inputs, as $\faultDomain$ contains FSMs $I$ that contain all states and transitions of $M$ and also some transitions for state $s$ and input $x$ such that $x$ is disabled in $s$ in $M$. Such $I$ are quasi-reductions but do not pass $T$, as they contain states exhibiting more defined inputs than the corresponding states in reference model $M$. 

By using modified oracles, test suites generated by \textsc{GenerateTestSuite} can be used to test for language-equivalence. Such oracles must require  the SUT to exhibit the same set of responses as the specification to any test case and also check after each step that the defined inputs in the current states of SUT and specification coincide. This approach uses the fact that any set $W$ that r-distinguishes states $s_1,s_2$ must contain some input sequence $\bar{x}$ such that the responses of $s_1$ and $s_2$ to $\bar{x}$ differ or some shared response leads to state with differing defined inputs. Furthermore, \textsc{GenerateTestSuite} can be used to test for reduction between complete observable FSMs, as in this case the algorithm is essentially reduced to the classical state counting method. This also enables the use of \textsc{GenerateTestSuite} in generating test suites that are complete for quasi-reduction between an observable specification $M$ and complete observable implementations $I \in \faultDomain$,
as there exist techniques to complete $M$ to some $M'$ such that $I$ is a quasi-reduction of $M$ if and only if it is a reduction of $M'$. Such completions are described in~\cite{DBLP:journals/tse/Hierons17}.

\subsection{Significance of r(0)-distinguishability}\label{sec:r0impact}

As discussed in the previous section, function \textsc{GenerateTestSuite} could also be implemented using r-distinguishability as defined for quasi-reduction, considering only defined inputs. However, stronger definitions of r-distinguishability can result in fewer pairs of states being r-distinguishable. This can in turn reduce the size of maximal pairwise r-distinguishable sets and hence delay the termination of sequences in the traversal sets $Tr(s,m)$.

To provide some intuition on the impact of the definition of r-distinguishability on the generated test suite, we have computed test suites for $M_{ex}$ and the card reader example $\crd$, using three different variants of r-distinguishability:
\begin{itemize}
    \item[\texttt{rd1}] r-distinguishability as described in Definition~\ref{def:rdis} 
    \item[\texttt{rd2}] r-distinguishability as described in Definition~\ref{def:rdis}, but excluding r(0)-distinguishability (still allowing $s_1$ and $s_2$ to be r(1)-distinguished by some input $x$ defined in only one of them)   
    \item[\texttt{rd3}] r-distinguishability for (quasi-)reduction (considering only input sequences defined in both $s_1$ and $s_2$)     
\end{itemize}

\begin{table}
    \footnotesize
    \caption{Test suite sizes depending on the definition of r-distinguishability}
    \begin{center}
        \begin{tabular}{|l|l|l|l|l|}\hline\hline
            {\bf variant} & \multicolumn{2}{c|}{$M_{ex}$} & \multicolumn{2}{c|}{$\crd$} \\
            & \textbf{test cases} & \textbf{inputs} & \textbf{test cases} & \textbf{inputs} \\\hline\hline
            \texttt{rd1} & 20 & 116 & 473 & 3186 \\
            \texttt{rd2} & 25 & 146 & 1509 & 9984 \\
            \texttt{rd3} & 54 & 365 & out of memory & - \\                        
            \hline\hline
        \end{tabular}
    \end{center}
    \label{table:rd_impact}
    \normalsize
\end{table}%
Table~\ref{table:rd_impact} describes the number of test cases and the total number of inputs applied over all test cases for these variants for $M_{ex}$ and the card reader specification $\crd$, assuming in both cases that implementations have at most as many states as the specification (i.e., that $m=n$). The increase in test suite size between \texttt{rd1} and \texttt{rd2} can be attributed to \texttt{rd2} always requiring the application of at least one input to r-distinguish states. Variations \texttt{rd1} and \texttt{rd2} do not, however, differ in the pairs of states considered r-distinguishable. This is in contrast to \texttt{rd3}, which no longer allows states to be r-distinguished due to differing defined inputs. Thus, \texttt{rd3} can induce smaller maximal sets of pairwise r-distinguishable states. For $\crd$, \texttt{rd3} induces 6 such sets:
\begin{align*}
    S'_1 &:= \{\texttt{init}, \texttt{card0}\} \\
    S'_2 &:= \{\texttt{init}, \texttt{card1}\} \\
    S'_3 &:= \{\texttt{init}, \texttt{auth0}\} \\
    S'_4 &:= \{\texttt{init}, \texttt{auth1}\} \\
    S'_5 &:= \{\texttt{init}, \texttt{PIN0}, \texttt{PIN1}, \texttt{PIN2}\} \\
    S'_6 &:= \{\texttt{init}, \texttt{ejected0}, \texttt{ejected1}\}
\end{align*}
These are much smaller than the sets $S_{00}$ to $S_{11}$ computed for \texttt{rd1} in Section~\ref{sec:rdis}, each of which contained 8 states. Recall, that in $\crd$ all states are d-reachable. Thus, when using \texttt{rd3},  termination of traces in, for example, $Tr(\texttt{init},m) = Tr(\texttt{init},10)$ occurs only after visiting states of $S'_i$ exactly $m-2+1=9$ times for $1 \leq i \leq 4$ or states of $S'_5$ or $S'_6$ exactly $m-3+1=8$ times. This requires many more visits compared to using \texttt{rd1} or \texttt{rd2}, where termination of traces in $Tr(\texttt{init},10)$ occurs after visiting states of any of the four maximal pairwise r-distinguishable sets exactly $m-8+1=3$ times. Furthermore, each $S'_i$ is a proper subset of at least one of the sets $S_{00}$ to $S_{11}$.  Therefore, traces in $Tr(\texttt{init},10)$ are much longer when using \texttt{rd3}, than they are for \texttt{rd1} or \texttt{rd2}. For example, using \texttt{rd3} requires the cycle \texttt{init, card0, auth0, PIN0, ejected0, init} to be repeated 4 times, for a total of 20 transitions, until termination occurs by reaching states of $S_6$ exactly 8 times. In contrast, the same cycle is not completed once when using \texttt{rd1}, as the first three transitions already visit states of $S_{00}$ exactly 3 times.
Due to this effect, the test suite for $\crd$ and \texttt{rd3} is not feasible to compute in a reasonable amount of time or space\footnote{The experiments were performed using Ubuntu 18.04 on a Intel Core i7-4700MQ processor and 16GB RAM.} and is not included in Table~\ref{table:rd_impact}.

The test suites described in Table~\ref{table:rd_impact} and executable implementations of all three variants are available for download under \url{http://www.mbt-benchmarks.org}.

\section{Related Work}\label{sec:related}

The investigation of complete test suites derived from FSMs has a very long tradition, starting  with~\cite{chow:wmethod,vasilevskii1973} where the so-called W-Method for testing language equivalence against deterministic completely specified FSMs has been presented. These original generation methods were refined with the objective to reduce test suite sizes while preserving completeness. Today, the H-Method published in~\cite{DBLP:conf/forte/DorofeevaEY05}, the SPY method~\cite{simao_reducing_2012}, and the SPYH-method~\cite{Soucha2018} appear to be the most advanced test generation methods for language equivalence testing. 

At the same time, the investigation of complete test generation methods was extended to nondeterministic FSMs, where language equivalence is of lesser importance, since implementations typically show only subsets of the behaviours allowed by the reference model. This led to tests checking whether an implementation is a reduction of a (nondeterministic) reference model. Important publications in this direction are~\cite{petrenko_testing_2011,hierons_testing_2004}.

The utilisation of fault models has originally been advocated in~\cite{gotzhein_fault_1996} and (using the terms \emph{validity} and \emph{unbias} for exhaustiveness and soundness, respectively)~\cite{DBLP:conf/tapsoft/Gaudel95}. The concept of fault models has been refined further in~\cite{DBLP:series/natosec/Pretschner15}.

Originally, complete test suites were regarded as interesting theoretical research objects, but of lesser practical importance. This was due to their unmanageable size when deriving tests from FSM models representing more complex real-world control applications. This has changed since it has been shown that by constructing equivalence classes, the test suite size can be considerably reduced while still preserving the completeness property~\cite{peleska_sttt_2014,Huang2017}. 
This theory has been applied and evaluated for the test of  control systems with medium complexity in~\cite{Huebner2017}.
This experimental evaluation also shows that complete equivalence class test suites still exhibit considerable test strength, if the SUT is {\it not} contained in the fault model.

Partial FSMs have been defined and investigated to some extent quite early, initially without considering applications to testing. In~\cite[Chapter~7]{gill62}, the term \emph{input restricted machine} was used to introduce partial, deterministic FSMs. The unavailability of certain inputs in specific states was considered to be imposed by environments that are ``unable'' or ``unwilling'' to provide these inputs is these situations. The notion of quasi-equivalence has also been defined in~\cite{gill62} for the first time. Since only deterministic machines were considered, however, quasi-reduction or comparable terms have not been discussed. In~\cite{Starke72}, the notion of partial nondeterministic FSMs was already acknowledged, but only interpreted as  malformed specification models, since in presence of partial machines, {\it \dots the system can not function in a well-defined way.''}~\cite[p.~146]{Starke72}.

The practical relevance of partial FSMs was recognised in the 1990ies in the context of protocol specifications and protocol testing~\cite{DBLP:conf/pts/Petrenko91}. As reviewed in~\cite{DBLP:journals/tc/PetrenkoY05}, different suggestions have been made to treat missing inputs in a given state: (1) partial FSMs can been \emph{completed} by adding a self-loop for each missing input, together with a ``null-output'' indicating that the FSM does not provide any reaction to such an input~\cite{DBLP:journals/tse/SidhuL89,DBLP:journals/jcss/YannakakisL95}. This variant corresponds to ignored inputs in our classification. (2) Alternatively, a partial FSM can be completed 
by adding an \emph{error state} and creating a transition for every state and unspecified input to the error state, accompanied by an output indicating the occurrence of an error. The error state responds with a self-loop and error output to any input~\cite{DBLP:journals/jcss/YannakakisL95}. 

These techniques to construct complete FSMs from partial ones were regarded as useful in the context of protocol testing for the purpose of \emph{strong conformance testing}~\cite{DBLP:journals/jcss/YannakakisL95}, where reference models are considered to determine {\it every} behaviour of a protocol implementation. It has been criticised in~\cite{DBLP:conf/pts/Petrenko91}, however,  that the completion method is useless in a situation where the operational environment prevents the occurrence of unspecified events. This situation may be formally captured by modelling both the expected behaviour of the SUT and the operational environment as state machines and building the resulting product machine, which is only partially defined as long as the environment is not allowed to produce any input in any operation situation.

The notion of quasi-equivalence has also been discussed in~\cite{DBLP:journals/jcss/YannakakisL95}, where the definition of \emph{weak conformance} has been introduced for partial deterministic FSMs: in weak conformance, the implementation machine may exhibit any behaviour in case of an unspecified input.

None of the publications referenced above discuss the possibility that the target system itself disables the occurrence of inputs. Consequently, the practicability and the advantages of a grey-box testing approach to systems with state-dependent disabled inputs,
as well as the notion of strong reduction studied in this article have not been investigated either.

In UML and SysML~\cite{uml_2_5,SysML19}, the FSM concept of inputs triggering a transition has been generalised to \emph{triggers} denoting that the transition will accept   the occurrence of an (atomic or parameterised) signal, a change event indicating that the valuation of a specified condition changes from false to true, or an operation invocation.  
The occurrence of such an event in a state machine state where none of the transitions have a corresponding trigger leads to the event being lost for this state machine, unless the event is \emph{deferred} to be processed in a state to be visited later. This corresponds to the concept of ignored events discussed in this paper. There are no analogies in UML/SysML to the concepts of undefined events and disabled events.

It is interesting to note that in the field of labelled   transition systems, the well-known ioco-conformance relation~\cite{DBLP:conf/fortest/Tretmans08} is very similar to quasi-reduction for FSMs: with ioco, the implementation is allowed to exhibit any behaviour on sequences of events that are not contained in the so-called suspension traces of the reference transition system. 

In this article, we have considered several interpretations of input events that do not occur is a certain FSM state, but we did {\it not} consider the possibility that the environment could be {\it blocked} when offering an unspecified input to an FSM. The reason for this omission  is that the simple semantics of FSMs does not consider concepts like `blocking' or `deadlock'. These notions have been investigated in depth in the field of process algebras, in particular {\it Communicating Sequential Process (CSP)}~\cite{Hoare:1985:CSP:3921,Roscoe:1997:TPC:550448}, where synchronous channel communications can be nondeterministically refused, and communicating processes may offer or accept communications simultaneously on several channels. It is interesting to note that, despite their more expressive semantics,
 complete testing theories also exist for these algebras as shown, for example, in~\cite{DBLP:conf/pts/CavalcantiS17,PELESKA20191}.

\section{Conclusion}\label{sec:comc}

We have presented the new conformance relation \emph{strong reduction} for testing implementations against partial, nondeterministic, finite state machines. This relation is especially well-suited for the verification of systems whose inputs may be disabled or enabled, depending on the internal system state. This occurs frequently
in the case of graphical user interfaces or systems with interfaces that are mechanically enabled or disabled during system execution. It has been explained how the  new relation complements the existing FSM-related conformance relations language equivalence, reduction, quasi-equivalence, and quasi-reduction.

A new test generation algorithm producing complete suites for verifying strong reduction conformance has been introduced, and its completeness property has been proven. The tests are executed in a grey-box setting, where the state-dependent enabled and disabled inputs of the implementation are revealed in each test step. This grey-box information leads to significantly smaller test suites: complexity calculations showed that in the best case, the test suite size depends on the reference model size only in a linear way, while the known black-box algorithms for other conformance relations produce suites growing at least quadratically with the size of the reference model.

The test generator is available in the open source library {\it fsmlib-cpp}.

\appendix
\section{A 4-complete Test Suite for the Example FSM}\label{appendix:mex_test_suite}

The following 20 test cases, containing a total of 116 inputs, constitute a possible result of applying \textsc{GenerateTestSuite($M_{ex}$,$4$)}. Recall that $M_{ex}$ has been defined in Section~\ref{sec:dreach}.
 
\begin{verbatim}
a.b.a.a.a.a.b
a.b.a.a.b.a.b
a.b.a.b.a.a.b
a.b.a.b.b.a.b
a.b.b.a.a.a.b
a.b.b.a.b.a.b
a.b.b.b.a.a.b
a.b.b.b.b.a.b
a.a.a.a.b
a.a.a.b.b
a.a.b.a.b
a.a.b.b.b
b.a.a.a.b
b.a.a.b.b
b.a.b.a.b
b.a.b.b.b
b.b.a.a.b
b.b.a.b.b
b.b.b.a.b
b.b.b.b.b
\end{verbatim}

This test suite can be obtained using the \emph{fsmlib-cpp} library\footnote{Publicly available for download at \url{https://github.com/agbs-uni-bremen/fsmlib-cpp}.}, an open source project programmed in C++. 
The library contains fundamental
algorithms for processing   Mealy Machine FSMs  
and a variety of model-based test generation algorithms, including an implementation of \textsc{GenerateTestSuite} as described in Fig.~\ref{alg:GenerateTestSuite}.
Download and installation of the library are explained in the 
lecture notes~\cite[Appendix B]{PeleskaHuangLectureNotesMBT} which are also publicly available.

After installation, the \texttt{fsm-generator} executable can be employed  as follows:
\begin{verbatim}
fsm-generator -sr -a 0 <path to specification FSM>             
\end{verbatim}
Here \texttt{-sr} indicates that a test suite for strong reduction is to be computed. Next, \texttt{-a 0} specifies the maximum number of additional states of FSMs in the fault domain compared to the specification FSM $M$. Thus, in the above command, $m$ is set to $|M|+0$. Finally, a path to a file specifying $M$ is required. The expected format of such files and further available parameters of \texttt{fsm-generator} are described in~\cite[Appendix B]{PeleskaHuangLectureNotesMBT}.

\section{Proof of Complexity Results}\label{sec:complexityproofs}

In order to calculate bounds on the number of test cases to be generated in specific situations, it is useful to 
``unroll'' the test generation algorithm specified in Fig.~\ref{alg:GenerateTestSuite} into the representation shown in Fig.~\ref{alg:GenerateTestSuiteX}. There, we use the notation $\Pref_1(\bar x/\bar y)$ to denote the  set of {\it non-empty} prefixes of $\bar x/\bar y$. The cardinality is obviously 
$|\Pref_1(\bar x/\bar y)| = |\bar x|$. 
It is straightforward to see that the two algorithm representations are semantically equivalent. Throughout this section, we use notation $a = m-n$ to denote the maximal number of additional states that may be used in the SUT.

\begin{figure}
    \begin{algorithmic}[1]
        \Function{GenerateTestSuite}{$M,m$}  : $\mathbb{P}(\Sigma_I^*)$
        \State{choose a state cover $V$ of $M$}
        \State{$V' \gets \{ \bar{x}/\bar{y} \in \lang(M)~|~\bar{x} \in V \}$}
        \State{calculate $Tr(s,m)$ for each $s\in \widehat{S}$}
        \State{$T\gets \bigcup_{s \in \widehat{S}} \  \{\bar{v}_s\}.Tr(s,m)$}
        \State $D\gets \{ (s,\bar{x}/\bar{y})~|~s \in \widehat{S}, \bar{x}/\bar{y} \in \lang_M(s), \ term(s,\bar{x}/\bar{y},m) \not= \varnothing  \}$ 
        \ForAll{$\bar{x}_1/\bar{y}_1, \bar{x}_2/\bar{y}_2 \in V'$}
                \State{$s_1 \gets \ii s\after \bar{x}_1/\bar{y}_1$}
                \State{$s_2 \gets \ii s\after \bar{x}_2/\bar{y}_2$}
                \If{$s_1, s_2$ are r-distinguishable}
                    \State{$W' \gets \{ \bar{x}'~|~\{\bar{x}_1.\bar{x}',\bar{x}_2.\bar{x}'\} \subseteq \Pref(T) \}$}
                    \If{$W'$ does not r-distinguish $s_1, s_2$}
                        \State{choose $W$ that r-distinguishes $s_1, s_2$}
                        \State{$T \gets T \cup \{\bar{x}_1, \bar{x}_2\}.W$}
                    \EndIf
                \EndIf
            \EndFor
        \ForAll{$(s,\bar{x}/\bar{y}) \in D$}
            \State{choose an $S_i \in term(s,\bar{x}/\bar{y},m)$}
            \ForAll{$\bar{x}_1/\bar{y}_1\in V', \bar{x}_2/\bar{y}_2\in \{\bar{v}_s\}.\Pref_1(\bar{x}/\bar{y})$}
                \State{$s_1 \gets \ii s\after \bar{x}_1/\bar{y}_1$}
                \State{$s_2 \gets \ii s\after \bar{x}_2/\bar{y}_2$}
                \If{$s_1 \not= s_2 \wedge s_1 \in S_i \wedge s_2 \in S_i$}
                    \State{$W' \gets \{ \bar{x}'~|~\{\bar{x}_1.\bar{x}',\bar{x}_2.\bar{x}'\} \subseteq \Pref(T) \}$}
                    \If{$W'$ does not r-distinguish $s_1, s_2$}
                        \State{choose $W$ that r-distinguishes $s_1, s_2$}
                        \State{$T \gets T \cup \{\bar{x}_1, \bar{x}_2\}.W$}
                    \EndIf
                \EndIf
            \EndFor
        \EndFor    
        \ForAll{$(s,\bar{x}/\bar{y}) \in D$}
            \State{choose an $S_i \in term(s,\bar{x}/\bar{y},m)$}
            \ForAll{$(\bar{x}_1/\bar{y}_1, \bar{x}_2/\bar{y}_2) \in (\{\bar{v}_s\}.\Pref_1(\bar{x}/\bar{y}))^2$}
                \State{$s_1 \gets \ii s\after \bar{x}_1/\bar{y}_1$}
                \State{$s_2 \gets \ii s\after \bar{x}_2/\bar{y}_2$}
                \If{$s_1 \not= s_2 \wedge s_1 \in S_i \wedge s_2 \in S_i$}
                    \State{$W' \gets \{ \bar{x}'~|~\{\bar{x}_1.\bar{x}',\bar{x}_2.\bar{x}'\} \subseteq \Pref(T) \}$}
                    \If{$W'$ does not r-distinguish $s_1, s_2$}
                        \State{choose $W$ that r-distinguishes $s_1, s_2$}
                        \State{$T \gets T \cup \{\bar{x}_1, \bar{x}_2\}.W$}
                    \EndIf
                \EndIf
            \EndFor
        \EndFor
        \State{$T \gets \{ \bar{x} \in T~|~\nexists \bar{x}' : \bar{x}' \not= \epsilon \wedge \bar{x}\bar{x}' \in T  \} $}
        \State \Return $T$
        \EndFunction
    \end{algorithmic}
    \caption{Algorithm generating $m$-complete $\strongred$-conformance test suites -- unrolled version equivalent to the algorithm in Fig.~\ref{alg:GenerateTestSuite}.}
    \label{alg:GenerateTestSuiteX}
\end{figure}

\subsection{Deterministic, completely specified case.}\label{sec:appdeter}
If the reference model $M$ is deterministic and completely specified, $M$ can also be assumed to be minimised. Under these assumptions, all states are d-reachable, so $|V| = |S| = n$, and $\widehat S = S$. Moreover, all states are 
pairwise distinguishable, so $S_D = \{ S \}$. Therefore, every step through the FSM visits a state of the
only element $S$ of $S_D$. The termination criterion to reach $m -|\widehat{S_i}| + 1$ states of some $S_i\in S_D$
is simplified to $m -|\widehat{S_i}| + 1 = m - |\widehat S| + 1 = m - |S| + 1 = m-n+1 = a+1$.
Thus, the definition of the sets $Tr(s,m)$ can be re-written as 
\begin{align*}
Tr(s,m) & = \Pref \{ \bar{x}~|~\exists \bar{y} : \  \bar{x}/\bar{y} \in \lang_M(s) \wedge \ term(s,\bar{x}/\bar{y},m) \not= \varnothing \} \\
& = \Pref \{ \bar{x}~|~|\bar{x}| = a + 1 \} \\
& = \bigcup_{i=0}^{a+1} \Sigma_I^i
\end{align*}
Note that for the case considered here, the sets $Tr(s,m)$ are in fact independent of the state $s$.

Therefore, the set $T$ is initialised to $T_\text{init} \leftarrow V.\bigcup_{i=0}^{a+1} \Sigma_I^i$ in line~5 of the algorithm. In the loop of the algorithm from line~7 to line~20, this set may be extended, but the elements initially inserted remain unchanged. In line~21, the test case set $T$ is finally reduced to input sequences of maximal length. From the initial assignment in line~5, just the elements from $V.\Sigma_I^{a+1}$ remain. The cardinality of this set has the following upper bound.
\begin{equation}\label{eq:tinit}
|T_\text{init}| = |V.\Sigma_I^{a+1}| \le |V|\cdot |\Sigma_I^{a+1}| = n \cdot |\Sigma_I|^{a+1}
\end{equation} 

The set $D$ assigned in line~6 of the algorithm is calculated for the case considered here by
\begin{align*}
D & = \{ (s,\bar{x}/\bar{y})~|~s \in \widehat{S}, \bar{x}/\bar{y} \in \lang_M(s), \ term(s,\bar{x}/\bar{y},m) \not= \varnothing  \} \\
& =    \{ (s,\bar{x}/\bar{y})~|~s \in S, \bar{x}/\bar{y} \in \lang_M(s),  |\bar x| = a+1  \} \\
& = \{ (s,\bar{x}/\bar{y})~|~s \in S, \bar x\in\Sigma_I^{a+1}, \bar{x}/\bar{y} \in \lang_M(s)  \}
\end{align*}
Since $M$ is deterministic and completely specified, there exists exactly one output sequence $\bar y$ for any given input $\bar x\in \Sigma_I^{a+1}$. Therefore, the cardinality of $D$ is
\begin{equation}\label{eq:cardD}
|D| = |S| \cdot |\Sigma_I|^{a+1} = n \cdot |\Sigma_I|^{a+1}
\end{equation}
and corresponds to the number of cycles performed by the outer loops in lines~18 and 32 of the algorithm in Fig.~\ref{alg:GenerateTestSuiteX}.

We will now elaborate bounds for the number of input sequences added to $T$ in each of the loops starting in lines~7, 18, and 32.

\subsubsection*{Loop in line~7.} Since $M$ is deterministic and completely specified by assumption, each input sequence stimulates exactly one output sequence. Therefore, $|V'| = |V| = n$. The number of distinct pairs 
$\bar{x}_1/\bar{y}_1 \neq \bar{x}_2/\bar{y}_2 \in V'$ is $\binom{n}{2} = \frac{1}{2}(n^2 - n)$. For a worst-case estimate, it is assumed that the if-condition in line~10 always evaluates to true, so that $T$ is extended by exactly two input sequences in line~14 (note that $W$ contains just one element distinguishing $s_1, s_2$). Summarising, the loop in line~7 adds at most $n^2 - n$ new elements to $T$.

\subsubsection*{Loop in line~18.} The outer loop in line~18 is executed $|D|$ times. Therefore, the upper bound of elements to be added in the inner loop starting in line~20 needs to be multiplied by $n \cdot |\Sigma_I|^{a+1}$ according to Equation~\eqref{eq:cardD}. 

The inner for-loop starting in line~20 is executed $|V'|\cdot |\bar x| = n \cdot (a+1)$ times.
In line~19, $S$ is always chosen, since $S_D = \{ S \}$. Therefore, the if-condition in line~23 evaluates to true whenever $s_1\neq s_2$. For the worst-case upper bound, it is assumed that this is always true and that $W'$ defined in line~24 never distinguishes $s_1$ and $s_2$. Thus, the inner loop adds at most $2\cdot n \cdot (a+1)$
elements to $T$. 

Multiplied with number of iterations of the outer loop starting in line~18, this results in an upper bound 
of $$n \cdot |\Sigma_I|^{a+1}\cdot 2\cdot n \cdot (a+1) = 2\cdot n^2\cdot(a+1)\cdot |\Sigma_I|^{a+1}$$
elements to be added to $T$.

\subsubsection*{Loop in line~32.} Again, the outer loop in line~32 is executed $|D| = n \cdot |\Sigma_I|^{a+1}$ times. The if-block in lines~39---42 of the inner loop starting in line~34 is executed at most
$\binom{|\{\bar{v}_s\}.\Pref_1(\bar{x}/\bar{y})|}{2}$ times. Recalling that 
$|\Pref_1(\bar{x}/\bar{y})| = a + 1$, that $\binom{a+1}{2} = \frac{1}{2}\cdot(a^2+a)$, and that   two input sequences are added in line~41, this leads to an
upper bound of 
$$
n \cdot |\Sigma_I|^{a+1} \cdot (a^2+ a) 
$$
of elements to be added to $T$ when executing the outer loop starting in line~32.

\subsubsection*{Overall estimate.}
We do not have an estimate for the prefixes to be deleted from $T$ in line~46, originating from input sequences previously added during execution of the three loops. Therefore, we use the sum of $|T_\text{init}|$ after
removal of prefixes plus
the maximal contribution of elements in the three loops  as upper bound $B$, that is,
\begin{align}
B = &\big(n \cdot |\Sigma_I|^{a+1}\big) + \big( n^2 - n \big) + \big(  2\cdot n^2\cdot(a+1)\cdot |\Sigma_I|^{a+1} \big) + \big( n \cdot |\Sigma_I|^{a+1} \cdot (a^2+ a) \big) \label{eq:deterbound} \\
= & n^2\cdot|\Sigma_I|^{a+1} \cdot 2(a+1) + n^2 +  n\cdot|\Sigma_I|^{a+1}\cdot(a^2 + a + 1) - n  \nonumber
\end{align}


\subsection{Best-Case Test Suite Size Reduction Effect from Grey-Box Testing}\label{sec:bestcase}

Next, we consider reference FSMs $M$ with the following properties
\begin{enumerate}
\item $M$ is nondeterministic and partial,  so that {\it all} states are pairwise r(0)-distinguishable (i.e.~$\Delta_M(s_1) \neq \Delta_M(s_2)$ for all $s_1\neq s_2\in S$).
\item All states of $M$ are d-reachable.
\end{enumerate}
These properties represent the edge case of a nondeterministic, partial model which is {\it best-suited} for reduction testing, since complete test suites can be created with a minimal amount of cases.

Since all states are d-reachable, $V = S$. Since all states are r(0)-distinguishable, $\widehat S = S$ and
$S_D = \{ S \}$. The termination condition to reach states of some $S_i\in S_D$ at  $m - |\widehat{S_i}| +1$ times is simplified as $m - n + 1 = a + 1$. Therefore, the sets $Tr(s,m)$ are again independent of $s$ and have the value
$$
Tr(s,m) = \bigcup_{i=0}^{a+1}\Sigma_I^i
$$
just as in the deterministic, completely specified case handled above. As a consequence, the initialisation of
set $T$ in line~5 of the algorithm has the same result as in the previous case, so its initial cardinality of $T$ after removal of prefixes is bounded  again by 
$$
|T_\text{init}| \le n\cdot|\Sigma_I|^{a+1}
$$

So far, the same properties hold as in the deterministic case investigated above.
The set $D$ is initialised  again in line~6 to 
$$
D = \{ (s,\bar{x}/\bar{y})~|~s \in S, \bar x\in\Sigma_I^{a+1}, \bar{x}/\bar{y} \in \lang_M(s)  \}
$$
The cardinality of $D$, however, is generally higher than in the deterministic case, because more than one $\bar x/\bar y\in L_M(s)$ can exist for a given input sequence $\bar x$ if $M$ is nondeterministic. It will be shown in the subsequent analysis that this does not affect the upper bound of test cases to be added to the suite.

The cardinality of $V'$ may similarly be greater than that of $V$, since one input sequence from $V$ may lead to different outputs, due to nondeterminism. By assumption, however, all states are d-reachable. Therefore, for a fixed $\bar v\in V$,  all pairs $\bar v/\bar y\in V'$ reach the same target state $\bar v_s$.

As before, we will now elaborate bounds for the number of input sequences added to $T$ in each of the loops starting in lines~7, 18, and 32.

\subsubsection*{Loop in line~7.} 
Suppose that the states $s_1, s_2$ assigned in lines~8 and 9 of this loop are different, and, therefore, r(0)-distinguishable by assumption. Thus, an assignment in line~14, only adds
the pair $\{ \bar{x}_1, \bar{x}_2 \}$  to $T$, since $W$ is the empty set. Since $\bar{x}_1/\bar{y}_1, \bar{x}_2/\bar{y}_2 \in V'$, however, these input sequences $\bar{x_1}, \bar{x_2}$ are already contained in the state cover
$V$, and, therefore, also contained as prefixes in the set $T$ initialised in line~5.

Summarising, this loop does not add any new element to $T$ for the case considered here.

\subsubsection*{Loop in line~18.}
As explained for the loop in line~7, the sets $W$ used in the assignment to $T$ in line~27 are always empty, since all pairs of non-equal states are r(0)-distinguishable. Moreover, every input sequence $\bar x_1$ is already contained in $T$ and, therefore, does not extend this set. Now consider the input sequences $\bar x_2$ arising from traces $\bar{x}_2/\bar{y}_2\in \{\bar{v}_s\}.\Pref_1(\bar{x}/\bar{y})$. By construction of $D$, the input sequence $\bar x_2$ is always a non-empty prefix of $\bar x$, and, therefore, 
$\bar x_2\in\bigcup_{i=1}^{a+1}\Sigma_I^i$. Consequently, $\bar x_2$ is also already contained as some prefix of
an input sequence in the set $T$ as initialised in line~5.

Summarising, this loop also does not add any new element to $T$ for the case considered here.

\subsubsection*{Loop in line~32.}
The same argument as given for the previous loop yields the fact that no new input sequences are add to $T$ in the third loop.

\subsubsection*{Overall estimate.}
As a consequence of none of the loops adding further test cases to $T_\text{init}$, an upper bound for the number of test cases to be produced in this case is given by a bound for the cardinality of $T_\text{init}$, which  is $n\cdot|\Sigma_I|^{a+1}$.

\subsection{Worst Case}\label{sec:worstcase}

The worst case size of complete test suites occurs if 
\begin{itemize}
\item the reference model is nondeterministic,
\item the initial state is the only d-reachable one, and
\item no pair of states is reliably distinguishable (so we can consider the reference model $M$ to be completely specified). 
\end{itemize}
For this case, we have $V = \{ \epsilon \}$, $V' = \{ \epsilon \}$, $\widehat S = \{ \ii s \}$, and 
$S_D = \{ \{ s \}~|~s\in S\}$ (recall that $S_D$ contains singleton sets if no pair of states is reliably distinguishable). The termination criterion $term(s,\bar x/\bar y,m) \neq \varnothing$ requires in this case, that in applying $\bar x/\bar y$ to $s$
\begin{itemize}
\item either the initial state of $M$ is visited $m - |\widehat{\{ \ii s \}}| + 1 = m - |\{ \ii s \}| + 1 = m$ times, 
\item or any other state $s \neq \ii s$ is visited $m - |\widehat{\{ s \}}| + 1 = m - |\varnothing| + 1 = m+1$  times.
\end{itemize}
Consider now the longest possible trace $\bar{x}/\bar{y} \in \lang_M(s)$ that is not terminated.
By the above criterion, this sequence reaches $\ii s$ at most $(m-1)$ times and any of the $(n-1)$ other states of $M$ at most $m$ times, while any extension of it by a single input must be terminated. Therefore, $(m-1) + (n-1) \cdot m = mn - 1$ constitutes an upper bound on the length of traces not terminated. The tightness of this upper bound is demonstrated by FSMs in which from each state $s_i$ there exists only a single transition, forming a single cycle $\ii{s},s_1,\ldots,s_{n-1},\ii s$ of $n$ states. In such FSMs, a trace of length $mn-1$ applied to the initial state visits each state $(m-1)$ times, whereas a trace of length $mn$ visits the initial state $m$ times and thus terminates.

From this upper bound it follows that for any trace $\bar{x}/\bar{y} \in \lang_M(s)$  of length $mn$, one of the prefixes of $\bar{x}/\bar{y}$ must be terminated. Hence, as $M$ is effectively completely specified, the sets $Tr(s,m)$ are each a subset of the set of all input sequences of length up to $mn$
$$
Tr(s,m) \subseteq \bigcup_{i=0}^{mn} \Sigma_I^i
$$
Since $\widehat S = \{ \ii s \}$ and $V = \{ \epsilon \}$, the initialisation of $T$ in line~5 of the algorithm results in
$$
   T = \{ \epsilon \}.Tr(\ii s,m) \subseteq \bigcup_{i=0}^{mn} \Sigma_I^i
$$
Removing the true prefixes from input sequences in $T$ results in a set of cardinality
$$
|T_\text{init}| \leq |\Sigma_I^{mn}| = |\Sigma_I|^{mn}	
$$ 

The three loops do not extend $T$, because this only happens for states $s_1, s_2$ that are r-distinguishable, which is never the case for the model $M$ considered here. Summarising, the total number of test cases to be executed in a complete test suite is at most $|\Sigma_I|^{mn}$.

\ack The authors would like to thank Wen-ling Huang and Rob Hierons for helpful comments on initial versions of this article.

\bibliographystyle{wileyj}
\bibliography{references,jp}

\end{document}